\documentclass[letterpaper,twocolumn,10pt]{article}
\usepackage{usenix}

\usepackage{tikz}

\usepackage{amsmath,amsfonts}
\usepackage{algorithm}
\usepackage{array}
\usepackage[caption=false]{subfig}
\usepackage{textcomp}
\usepackage{verbatim}
\usepackage{graphicx}
\usepackage{url}
\usepackage{tabularx}
\usepackage{algpseudocode}
\usepackage{amssymb,amsthm}
\usepackage{booktabs,multirow,makecell}
\usepackage[normalem]{ulem}
\usepackage{threeparttable}
\usepackage{placeins}
\usepackage{etoolbox}
\usepackage{float}

\newcommand{\best}[1]{\underline{\textbf{#1}}}
\newcolumntype{L}[1]{>{\hsize=#1\hsize\linewidth=\hsize\raggedright\arraybackslash}X}
\newcolumntype{C}[1]{>{\hsize=#1\hsize\linewidth=\hsize\centering\arraybackslash}X}
\newcolumntype{R}[1]{>{\hsize=#1\hsize\linewidth=\hsize\raggedleft\arraybackslash}X}
\newcommand{\PGTableNote}[1]{\par\medskip\raggedright #1\par}

\DeclareMathOperator{\Restore}{Restore}

\newtheorem{proposition}{Proposition}

\begin{document}

\date{}

\title{\Large \bf PrivacyGuard: Source-Linked Sanitization and Locally Resolvable PII References for Cloud GUI Agents}

\author{
{\rm Zian Yuan}\\
\texttt{zian.yuan@outlook.com}
\and
{\rm Wei Li}\thanks{Corresponding author.}\\
The University of Sydney\\
\texttt{weiwilson.li@sydney.edu.au}
}

\maketitle

\begin{abstract}
Cloud graphical user interface (GUI) agents send instructions and screenshots to remote planners, exposing personally identifiable information (PII). Existing placeholder defenses mask detected spans and reuse exact strings. Fragmented or changed forms can expose PII or break entity continuity across turns. PrivacyGuard addresses this private reference problem with source-linked sanitization (SAN) and a structured reference lifecycle (REF). SAN reconstructs fragmented or overlapping evidence and rewrites the complete source extent. REF links changed appearances to session entities and supports locally resolvable references without returning raw values to the upload path. In five-turn sessions scored against common gold, PrivacyGuard reduces audited source-value leakage by 33.55\%--82.60\% relative to the closest prior full-pipeline system in all four reader--language combinations. REF raises strict cross-form continuity from 0.04 under exact-value reuse to 0.81. In reference-required Android tasks, verified completion rises from 8/30 and 7/30 to 17/30 and 20/30. On AndroidLab, PrivacyGuard achieves the highest or tied-highest task-success point estimate among protection pipelines for all four planners. On PrivScreen, it reduces fixed-reader field exposure by 62.72\%--62.84\% relative to the same baseline. PrivacyGuard lowers upload exposure while preserving a locally resolvable task path.
\end{abstract}

\section{Introduction}
\label{sec:Introduction}

Cloud-based graphical user interface (GUI) agents send an instruction and the current screen to a remote planner before device execution begins~\cite{zhang2024large}. Personally identifiable information (PII) can appear directly in an uploaded view. Reusable references create a second privacy risk: visible type, equality, projection, and layout cues may support reconstruction when several protected uploads are examined together~\cite{zhao2026webpii,agyemang2026llmredactor,jathanna2026surrogateshield}. PrivacyGuard reduces exposure at this boundary by combining source-linked sanitization (SAN) with a structured reference lifecycle (REF). We use \emph{Full} for the combined SAN+REF pipeline. SAN rewrites complete source extents before upload, and REF preserves a locally resolvable task path without returning raw values to the upload path.

The problem arises when a protected entity appears in different forms. Consider a delivery task containing \emph{350 5th Avenue, Apt.\ 24B, New York, NY}. A later screen may show \emph{5th Ave.} in one optical character recognition (OCR) region and \emph{Apt.\ 24B} in another, followed by a request for the city. These fragments refer to the same destination, but neither is an exact copy of the original value. This case exposes the need for entity-level continuity, which keeps changed appearances linked and supports later component requests without exposing the raw address. We call this the \emph{private reference problem}: the protection unit is an entity and its references, not only a repeated string. Privacy requires removing the value from each upload, but task completion may still depend on the entity represented by that value.

Existing value-placeholder systems do not address both sides of the problem. AbI~\cite{zhao2026anonymization}, the closest prior full-pipeline system, uses deterministic placeholders, a session value map with exact lookup and fuzzy OCR alignment, and whole-value local restoration. Such value-level mapping cannot represent a complete value and later abbreviations, fragments, or component-only appearances as structured views of the same entity. Whole-value restoration cannot express a request for one supported component. The private reference problem has two coupled parts. The first is recovering complete, rewriteable source extents from fragmented OCR boxes and partial or overlapping detector outputs. The second is preserving entity continuity after source removal so that a later task can refer to the protected entity without releasing its raw value.

PrivacyGuard parses each accepted occurrence once into a source trace and a typed entity record. PrivacyGuard reconstructs fragmented evidence and resolves competing candidates before finalizing the source trace, which preserves the prompt offsets or screenshot regions that produced the occurrence. SAN uses this trace to rewrite the complete source extent instead of the detected span. This source-linked design reduces the risk that partial detections leave visible characters or divide one occurrence into unrelated protected regions.

The typed entity record supports REF. It retains the entity identity and locally available components needed to connect changed appearances. Compatible appearances reuse a session root, and supported address components can be requested by projection and resolved against local session state. Addresses provide the main evaluated projection case. The remaining supported PII types retain type-appropriate state. This distinction matters when a protected value is needed for task completion. Removing such a value can reduce GUI-agent success~\cite{wang2026guiguard}, but REF preserves a locally resolvable path without placing the raw value back into the upload path.

We evaluate task execution with protocol-conforming remote planners and privacy exposure with fixed readers. A released-view audit examines one sanitized instruction or screenshot, and a five-turn audit examines the ordered Full uploads from the same session. The Full-pipeline privacy endpoint, \emph{audited source-value leakage}, is the deduplicated union of per-view exposure and five-turn reconstruction. Task success remains a separate endpoint. Controlled and external experiments isolate SAN source completeness, REF cross-form continuity and reference-mediated execution, and the resulting Full-pipeline privacy and utility.

PrivacyGuard lowers audited source-value leakage by 33.55\%--82.60\% relative to the closest prior full-pipeline system in all four reader--language combinations. REF raises strict cross-form continuity from 0.04 with exact-value reuse to 0.81. In reference-required Android tasks, verified completion rises from 8/30 and 7/30 with exact-value reuse to 17/30 and 20/30 with PrivacyGuard. On the 138-task AndroidLab benchmark, PrivacyGuard achieves the highest or tied-highest task-success point estimate among protection pipelines for all four planners. On PrivScreen, it lowers fixed-reader field exposure by 62.72\%--62.84\% relative to the same full-pipeline baseline. A Xiaomi 12S workload demonstrates device-side feasibility of the packaged prototype.

This work makes three contributions:

\begin{itemize}

    \item \textbf{Source-linked sanitization.}
    PrivacyGuard addresses the private reference problem by reconstructing fragmented or overlapping evidence and using source traces to rewrite complete source extents. This design improves complete-occurrence coverage and reduces residual source characters on controlled and external inputs.

    \item \textbf{Structured reference lifecycle.}
    Typed entity records support session-stable roots, changed appearances, address-component projections, and deterministic local restoration. REF preserves a locally resolvable task path after source removal and improves cross-form continuity and verified task completion over exact-value reuse.

    \item \textbf{Full-pipeline evaluation.}
    CrossTurnPII provides bilingual gold annotations for source extent, entity identity, and projection in five-turn sessions. PrivScreen, AndroidLab, reference-required live tasks, and a Xiaomi 12S workload evaluate external privacy, task utility, and packaged-prototype feasibility while keeping privacy and task-success outcomes separate.

\end{itemize}

\section{System Model and Goals}
\label{sec:overview}

PrivacyGuard places SAN and REF in one mediation path between a trusted GUI-agent client and a remote planner. An accepted occurrence is a PII candidate that passes adjudication and produces a valid source trace and typed entity record. For each accepted occurrence, SAN follows its source trace to rewrite the linked content in the released instruction or screenshot. REF uses the corresponding entity record to render a structured reference backed by local session state. Compatible appearances bound to one entity reuse a session-scoped root, and supported projections allow the planner to request locally available address components. Returned references are validated and resolved locally before the trusted executor receives the action payload. Action choice, sink authorization, and execution remain GUI-agent responsibilities outside PrivacyGuard. Figure~\ref{fig:pg-architecture} summarizes the upload mediation pipeline, the two local-only stores, the reference-bearing cloud exchange, and the local restoration path.

\begin{figure*}[t]
\centering
\includegraphics[width=1\textwidth]{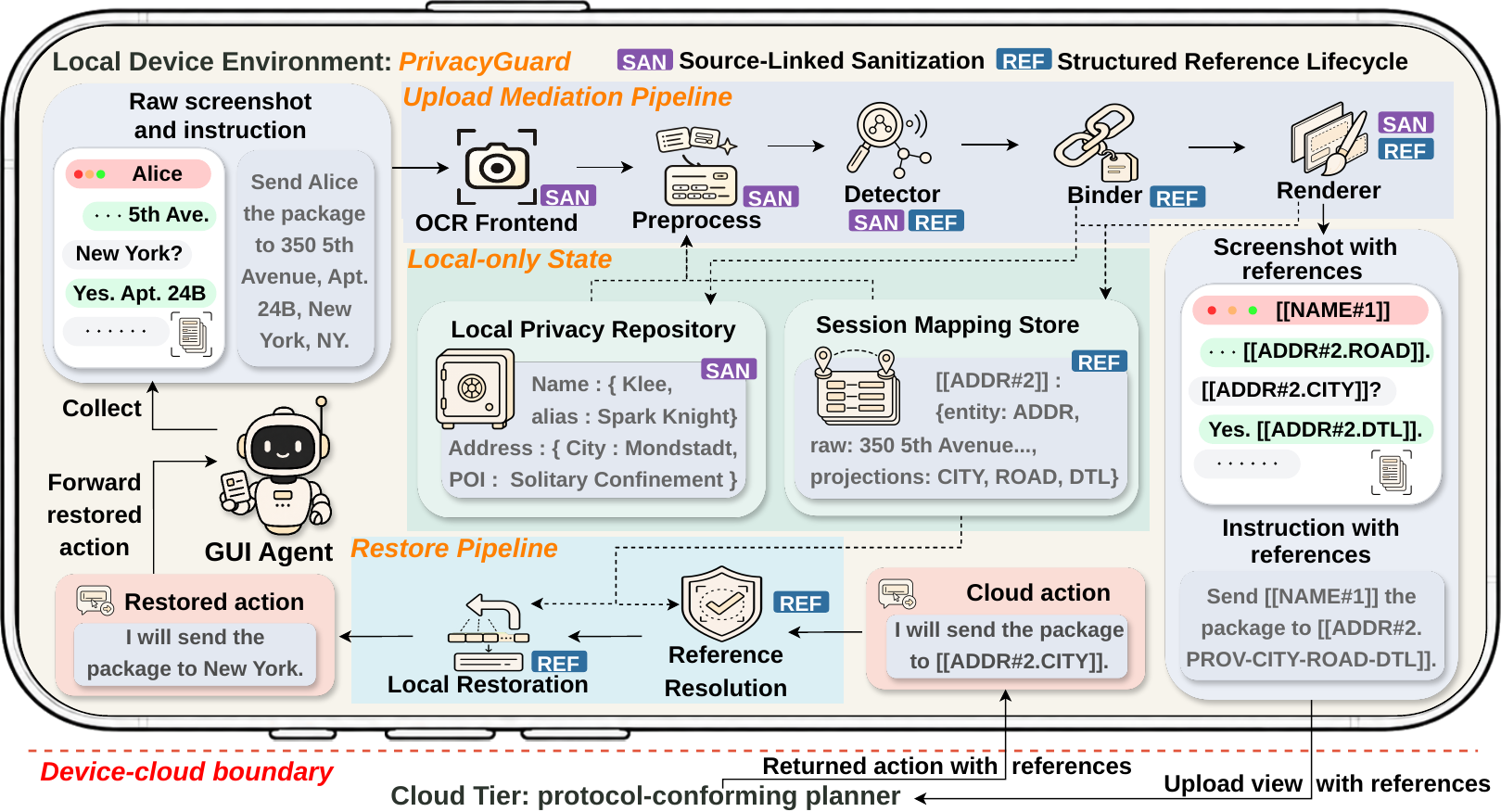}
\caption{PrivacyGuard's device-resident architecture. OCR and processing feed a shared detection and interpretation path. SAN rewrites accepted source-linked occurrences, while REF binds the resulting typed entities to session references and renders them in the released instruction or screenshot. The local privacy repository supplies detection evidence and the session mapping store retains roots, projections, and raw values on the trusted client. A protocol-conforming remote planner receives the released view and may return reference-bearing actions, which are validated, resolved, and restored locally before GUI-agent execution.}
\label{fig:pg-architecture}
\end{figure*}

\subsection{Mediation and Local State}
\label{subsec:local-state}

The client retains the raw view and two distinct stores. The optional \emph{local privacy repository} supplies user-linked values and aliases only as detection evidence. The ephemeral \emph{session mapping store} contains issued roots and their typed entity records and alone governs local resolution. References are ordinary strings. \texttt{[[TYPE\#N]]} denotes a root, and a suffix in \texttt{[[TYPE\#N.SPEC]]} denotes either a supported address component such as \texttt{CITY} or descriptive \texttt{LEN=L} metadata for NUM/ALNUM.

Compatible observations update an issued entity record only under the binding and conflict rules. Address updates merge compatible components and rebuild the canonical surface. Other types retain type-compatible normalized state. A bare root resolves to the current stored value. An address projection resolves only when the requested component is supported and locally present in the entity record. The contract is root-scoped, so a supported component need not have appeared previously as a rendered projection. Syntactically valid NUM/ALNUM \texttt{LEN=L} metadata is descriptive and does not affect authorization or value-integrity checking.

Malformed reference-like strings remain literal. Parsed requests with inactive roots, type mismatches, or unsupported or unavailable projections remain unresolved and literal. For fixed issued session state and returned text, resolution is independent of the local privacy repository. Appendix~\ref{sec:appendix-store-separation} gives the repository schema, update semantics, and formal separation property. Appendix~\ref{sec:appendix-topology} describes the deployment profiles and state lifetimes.

For example, a full destination and later \emph{5th Ave.}/\emph{Apt.\ 24B} fragments may bind to \texttt{ADDR\#2}. The planner can return \texttt{[[ADDR\#2.CITY]]}, which the client resolves to \emph{New York} when the active entity record contains that component. The raw address is removed before upload. The current implementation protects names, phone numbers, email addresses, organizations, addresses, and context-supported numeric or alphanumeric values. All supported types use parser-internal structure for detection or binding. Addresses are the main evaluated projection case.

\subsection{Trust Boundary and Adversary Model}
\label{subsec:security-scope}

In the intended deployment, the uncompromised client device, PrivacyGuard mediation path, local state stores, and the GUI agent's separate executor are trusted. For controlled comparisons, a trusted evaluation server hosts the same mediation function for every evaluated protection pipeline. This server is experimental apparatus, not the remote planner, and does not alter the device-resident deployment model. PrivacyGuard treats the remote planner as honest-but-curious for confidentiality. The planner follows the normal GUI-agent protocol but knows PrivacyGuard's design and public reference contract. It may retain and jointly analyze all released uploads and its own actions within the session using public application knowledge. The fixed readers instantiate the released-upload component of this passive reconstruction capability. They do not define the adversary model.

Planner output remains untrusted even during normal protocol execution. PrivacyGuard applies the local session contract to every returned reference-like string before resolution. The planner receives no direct resolution-success signal. A later uploaded screen may reveal effects of local execution, but that screen first passes through SAN as a new source view. Deliberate adaptive oracle probing that varies returned actions to infer local state from execution feedback is outside scope. PrivacyGuard excludes client compromise, mediation bypass, side channels, availability attacks, and action-level misuse. Execution integrity and sink authorization remain responsibilities of the GUI-agent layer. A client-supplied session namespace corresponds to a continuing goal and is reset or expires at goal boundaries. The confidentiality claim is session-scoped and does not cover correlations from external information between independent sessions.

The confidentiality endpoint is the client-released upload surface, consisting of the sanitized instruction and screenshot examined individually or as an ordered session history. The released-view audit applies a fixed reader to one sanitized instruction or screenshot. The five-turn audit supplies, in order, the sanitized instruction and screenshot from each of five turns in one Full session. This input contains every released reference and its visible type, root equality, projection, layout, and optional length cues. These audits exclude client-private state, restored payloads, and cloud-side context not contained in the released views. If a restored value or action result later appears in an uploaded instruction or screen, it becomes a new source occurrence and passes through SAN. Local resolution does not retroactively alter an upload already observed by the planner or a reader.

\subsection{Security and Functional Goals}
\label{subsec:goals}

\paragraph{G1. Source-linked removal.}
SAN seeks to remove the raw source content of every accepted occurrence from the final released instruction or screen. Its source trace links the occurrence to the prompt offsets or screenshot regions rewritten before upload. Recognition, source-tracing, or rendering failures remain possible and are measured, not assumed away.

\paragraph{G2. Reference continuity with explicit release cost.}
REF seeks to preserve a locally resolvable task path after SAN removes a protected value. Compatible appearances may reuse a session root, and supported address components may be requested through projections. Under the root-scoped contract, an active address root may select any supported component in its local record, including one not previously rendered as a projection. This capability is evaluated by reference-mediated execution. Released-view audits charge only uploaded information. Full exposes structured reference and layout cues absent from the SAN-only view. These cues can support task execution but may also help reconstruct protected values. REF provides no identity oracle. Its deterministic resolution contract applies to the root selected during binding, while agreement with the intended referent remains a measured outcome.

\paragraph{G3. Session-confined local resolution.}
Resolution is confined to roots issued in the active session and to values or components stored in their typed entity records. A model-generated string does not create a root, component, or authorization state. Failed requests remain unresolved and release no local value through the resolution interface. This contract controls local resolution rather than the semantic appropriateness of the surrounding action. For fixed issued session state and returned text, resolution is independent of unrelated values or aliases in the local privacy repository.

\subsection{Evaluation Views}

A \emph{SAN-only} view is erase-only. Detected source text or pixels are removed, and the renderer adds no readable replacement metadata. The evaluation includes SAN-only and Full variants of AbI and PrivacyGuard. The evaluated Presidio configurations are natively SAN-only and have no REF stage.

SAN is assessed through pre-render source-occurrence coverage and residual-character diagnostics, with fixed-reader source-value exposure from each final instruction or screen. On CrossTurnPII, REF is assessed through reference-mediated task capability and, for $m\in\{\mathrm{AbI},\mathrm{PrivacyGuard}\}$, the method-matched conditional re-exposure rate
\begin{equation}
\mathrm{RER}(m,A)
=
\frac{
\sum_i
\mathbf 1\{S_{m,A,i}=0\land F_{m,A,i}=1\}
}{
\sum_i
\mathbf 1\{S_{m,A,i}=0\}
},
\label{eq:method-matched-ref-increment}
\end{equation}
where $S_{m,A,i}$ and $F_{m,A,i}$ indicate whether fixed reader $A$ recovers gold entity $i$ from method $m$'s SAN-only and Full views. If the denominator is zero, $\mathrm{RER}(m,A)$ is undefined and reported as not applicable. The common-protected-set comparison evaluates each method only on entities unrecovered from both methods' SAN-only views, so the same entity set constrains both numerator and denominator.

For Full, \emph{audited source-value leakage} is the deduplicated union of values exposed exactly or partially in individual released views and reconstructed by the five-turn reconstruction probe. Task success remains a separate endpoint. Appendix~\ref{sec:appendix-metrics} defines the remaining metrics and scorers.

\section{Design and Implementation}
\label{sec:method}

PrivacyGuard first reconstructs source-linked character streams from the instruction and OCR output. The detector produces typed candidate evidence with provisional source links. Shared adjudication resolves competing candidates, and type-specific parsing finalizes each surviving candidate as an accepted occurrence. Every accepted occurrence contains a validated source trace and a typed entity record. SAN and REF consume this shared output and do not independently accept candidates. SAN uses the source trace to rewrite the source extent before upload. REF uses the typed entity record to maintain session identity, render structured references, and resolve returned references against local state.

\subsection{SAN Source-Linked Upload Sanitization}
\label{subsec:method-san}

\paragraph{Geometry-aware OCR stream reconstruction.}
Characters from the instruction retain their original offsets. OCR characters retain their block identifiers, bounding boxes, and in-block positions. Raw OCR order may separate one value when it wraps onto another line or spans neighboring boxes. PrivacyGuard computes two geometric successor relations to reconstruct this flow. A height-consistent rightward relation links fragments on the same line. A left-aligned downward relation links the end of one line to the beginning of the next. PrivacyGuard follows rightward successors before moving downward. A hard OCR break separates unrelated chunks, and a softer inline gap separates fragments within one reconstructed chunk. Ambiguous successors and noisy cycles are resolved by a deterministic reading order. The evaluated configuration fixes the relation predicates, geometric thresholds, and tie-breaking rules.
Reconstruction preserves the source links. The joined stream can be interpreted as one value, and every character points to the instruction offset or screenshot pixels that produced it.

\paragraph{Typed evidence and shared adjudication.}
The checkpoint-free clue scanner processes the reconstructed streams using validated formats, field labels, local context, optional local-privacy-repository matches, and active-session evidence. Each hypothesis records a proposed source extent, an inferred type, a local role, an evidence tier, provenance, and provisional source links. Labels and keys can support a neighboring value, but they are not promoted as values themselves. Hard breaks and negative scopes stop evidence from propagating into unrelated text. Numeric or alphanumeric evidence requires a private role assigned by local context or session evidence before it can become an accepted occurrence. The adjudicator delays acceptance when candidates overlap and first considers strict containment, type-specific conflicts, and the relative strength of hard and soft evidence. Type priority, score, and provenance resolve the remaining cases. Before discarding a losing candidate, the adjudicator removes the overlapping portion and revalidates any remaining source extent when possible. This step preserves a supported residual candidate without retaining an unsupported union. Each surviving candidate then undergoes type-specific parsing and evidence validation. A candidate becomes an accepted occurrence only when this stage produces a valid source trace and typed entity record.

\paragraph{Source-extent final-view rewriting.}
For each accepted occurrence, SAN rewrites the complete source extent linked by its validated source trace. Instruction spans are selected longest first and edited in reverse offset order, which preserves recorded source locations as earlier edits are applied. For screenshots, SAN maps the source trace back to one or more OCR boxes, refines the covered glyph boundary, and erases the linked pixels. In Full, REF renders the structured reference within the recorded geometry. An occurrence that spans several boxes retains their order, and the reference is split or reflowed when needed. SAN applies these rewrites to the final uploaded text and pixels. Failures in recognition, candidate acceptance, boundary refinement, erasure, or rendering can leave source content visible. The evaluation measures resulting recovery under the declared fixed readers and does not claim a reader-independent census of each residual glyph.

\subsection{Shared Structured Interpreter for Source Traces and Entity Records}
\label{subsec:method-shared-parser}

For each accepted occurrence, the structured interpreter emits a validated source trace and a typed entity record. The source trace records the accepted source extent together with its instruction offsets or screenshot geometry, while the typed entity record stores the type-specific identity and component state used by REF. A source-linked fragment also retains a \emph{display record} describing the component levels and geometry visible in the current appearance. SAN consumes the source extent and validated trace. REF uses the typed entity record and display record for binding, state accumulation, and reference rendering. Name records retain family-name, given-name, alias, and title evidence. Organization and scalar records retain type-appropriate canonical fields. Address records retain administrative, street, point-of-interest, building, and unit components together with admissible interpretations of uncertain administrative tokens. Addresses are the main evaluated component and projection case. The remaining PII types retain their own structured state.

\subsection{REF Structured Entity and Reference Lifecycle}
\label{subsec:method-ref}

\subsubsection{Same-Entity State Construction}

\paragraph{Current-turn appearance assembly.}
On screenshots, adjacent OCR address fragments can form one \emph{appearance group}. Grouping requires geometric adjacency and an allowed succession between their parsed component levels. Group members cannot occupy conflicting levels or duplicate the same primary surface. A group can span several OCR boxes. The appearance group produces one aggregate typed entity record for root lookup. This aggregate is bound once, after which all group members inherit the selected root. Each member retains its own source trace and display record. A fragment that does not enter an appearance group remains an independent accepted occurrence. For example, \emph{5th Ave.} and \emph{Apt.\ 24B} can share one root without merging their source pixels. Their separate display records also prevent either fragment from exposing the component labels of the other fragment.

\paragraph{Compatibility and deterministic root selection.}
REF processes only typed entity records produced for accepted occurrences. Scalar types reuse a root only after canonical equality. Names, organizations, and addresses use type-specific compatibility over their typed entity records. Compatibility requires the same entity type, no observed type-specific conflict, and a type-specific positive match. For scalar types, a positive match is canonical equality. For structured types, it is shared parser-recognized identity or component evidence admitted by the corresponding compatibility rules. Address comparison checks strong identity and administrative evidence together with street-level and detail evidence. Missing evidence is treated as unknown. Ambiguous administrative tokens are tested against their admissible interpretations. Compatibility rules out observed conflicts. It does not prove identity when the current observations lack distinguishing evidence. A false merge can occur when several distinct roots remain compatible and the selected root is not the intended referent. A false split can occur when a true continuation fails compatibility and receives a fresh root. Binding agreement and chain purity are measured outcomes. Proposition~\ref{prop:binding-separation} in Appendix~\ref{sec:appendix-analytical} states the implemented reuse constraint formally, and Appendix~\ref{sec:appendix-reference} gives the complete compatibility cases.

A binding input is either an independent accepted occurrence or the aggregate record of an appearance group. Let $c$ denote such a binding input, let $O(c)$ denote the normalized representation of its typed entity record, let $O(e)$ denote the corresponding representation of prior root $e$, and let $\operatorname{Compat}_{\tau}$ denote the compatibility predicate for type $\tau$. Candidate roots follow a deterministic type-specific order. Address lookup uses component indexes, including a canonical-road bucket, and evaluates only same-type roots with the production compatibility predicate. Root selection is
\begin{equation}
\begin{aligned}
    \operatorname{Bind}&(c,H)  = \\
    &\operatorname{first}_{\prec(c,H)}\left\{e\in H_{\tau(c)}\middle|\operatorname{Compat}_{\tau(c)}(O(c),O(e))\right\},
\end{aligned}
\label{eq:deterministic-binding}
\end{equation}
where an empty compatible set produces a fresh root. Within each index tier, $\prec(c,H)$ orders compatible roots by recency. The binder uses only this observable compatibility state. The gold intended referent is an evaluation label and is not available during selection.

Appendix~\ref{sec:appendix-analytical} compares recent-first with a counterfactual that allocates a fresh root when several prior roots remain compatible. On the continuation domain defined there, recent-first preserves every exact continuation preserved by that counterfactual. It may preserve a continuation when the most recent compatible root is intended. Recent-first can still select the wrong root when another compatible root is intended. This comparison concerns exact continuation on the stated domain and does not establish a lower false-merge risk. Binding inputs follow a fixed within-turn order. Each completed assignment updates the current-turn index, so later independent inputs can use roots created or updated earlier in the same turn. The fixed order makes same-turn updates deterministic.

\paragraph{Accumulated entity state.}
When an address reuses a root, PrivacyGuard merges compatible evidence into its accumulated entity state. It retains the more complete form of an equivalent component and deduplicates point-of-interest entries. It extends compatible detail sequences. A typed unit field can absorb redundant generic detail containing the same number tokens. PrivacyGuard then rebuilds the accumulated component order and canonical surface. The canonical surface is the current stored value returned by a bare root. Other PII types retain their type-compatible state and recorded source appearances. Local resolution reads only from this accumulated entity state.

\subsubsection{Reference Rendering and Local Resolution}

The root \texttt{[[TYPE\#N]]} identifies the accumulated entity state. Its suffix describes the component levels visible in the current appearance. Let $\mathcal K$ denote the fixed order \texttt{COUNTRY}, \texttt{PROV}, \texttt{CITY}, \texttt{DIST}, \texttt{ROAD}, and \texttt{DTL}. Let $\mathcal O(c)$ denote the ordered parsed components of occurrence $c$, and let $\ell(a)$ denote the display level of component $a$. PrivacyGuard constructs
\begin{equation}
\operatorname{SPEC}(c)
=
\operatorname{Ord}_{\mathcal K}
\left\{
\ell(a)
\middle|
a\in\mathcal O(c),
\ell(a)\neq\emptyset
\right\}.
\label{eq:projection-construction}
\end{equation}
A road-only fragment can render as \texttt{[[ADDR\#2.ROAD]]} even when the root stores a complete destination. \texttt{SPEC} describes the released appearance. It does not restrict the root-scoped projection contract. The return path can resolve any supported component present in the accumulated entity state, including one not shown in the current suffix. Generic NUM and ALNUM references may include an implementation-normalized character count in \texttt{LEN=L}. This cue preserves coarse format information such as the length of a numeric code or alphanumeric identifier without exposing the raw value. \texttt{LEN=L} is not a selector and cannot affect local resolution. Appendix~\ref{sec:appendix-reference} gives the exact normalization rules. NAME components remain local binding evidence in the evaluated configuration and are not exposed as external projections.

The return path scans complete double-bracketed candidates containing \texttt{\#}. It never joins text across bracket boundaries. Normalization folds only allowlisted full-width characters and separator confusables, removes supported internal whitespace, and matches the complete result against the anchored reference grammar.

Resolution requires an active-session root whose stored type matches the request. A bare root requests the canonical surface in the accumulated entity state. An address suffix requests a supported component that is locally present. A NUM or ALNUM \texttt{LEN=L} suffix carries descriptive metadata. Its nonnegative integer value is parsed but is not compared with the stored value. The resolver does not synthesize a missing component or widen a failed projection into a request for the complete value. A valid request restores the stored value or component locally, and every other candidate remains literal. Restoration produces the resolved action payload consumed by the trusted executor without modifying the upload sequence already released to the planner. Appendix~\ref{sec:appendix-algorithms} gives the binding and resolution pseudocode. Appendices~\ref{sec:appendix-reference} and~\ref{sec:appendix-restore-cases} give the exact projection and validation cases.

\subsection{Architecture and Configuration}
\label{subsec:method-config}

The pre-upload mediation hook performs OCR reconstruction, clue generation, shared adjudication, type-specific parsing, source rewriting, session binding, and reference rendering. The return hook runs after planner output and before the trusted executor. It validates and resolves returned references against the active session mapping store. Both hooks remain inside the trusted mediation tier. The evaluated implementation connects these hooks to geometry-aware OCR reconstruction, a checkpoint-free detector, type-specific parsers, session binding, reference rendering, and a restorer behind a fixed interface. Appendix~\ref{sec:appendix-topology} describes the server-hosted comparison profile and the device-resident handset profile. The implementation uses 33 static lexicon assets that occupy 0.28\,MiB and loads no learned named-entity-recognition checkpoint. 
OCR-free controls isolate the detector and structured interpreter from OCR behavior. Shared-OCR controls hold the frontend fixed between methods. Native-system rows retain each complete released pipeline.

The default release configuration uses a memory-only mapping store. It retains entity records between turns and removes an idle session after a configurable timeout. The experiments use a timeout of 3,600 seconds, and explicit session deletion uses the same store interface. An opt-in evaluation and audit mode can persist buffered session mapping state to local JSON. The persisted state includes issued roots and their local typed entity records and can contain sensitive value state. This mode is separate from the memory-only default and remains inside the trusted local boundary. The upload-confidentiality claim does not cover compromise or secure deletion of this audit file. Raw mapping values remain in trusted local state. Only the rewritten view and, in Full, its structured references cross the confidentiality boundary. RQ1 measures source content that remains visible in the released view.

\section{Evaluation Methodology}
\label{sec:evaluation-methodology}

\subsection{Research Questions and Evaluation Logic}
\label{subsec:eval-questions}

\textbf{RQ1} asks how completely SAN removes source-linked PII, what visible residuals remain, and how erase-only sanitization affects value-independent released-view utility. \textbf{RQ2} asks whether REF preserves entity continuity and creates a locally resolvable task path after SAN removes a required value, and measures the reconstruction exposure introduced by released references. \textbf{RQ3} measures the privacy exposure and task success of the complete Full pipelines on controlled and external workloads. \textbf{RQ4} measures latency, memory, and completion under device-resident execution. RQ1 and RQ2 are the empirical counterparts of the SAN and REF objectives, while RQ3 reports their realized Full-pipeline outcomes. Appendix Table~\ref{tab:dataset-evidence-roles} maps each evidence source to these questions and states its scope.

\subsection{Evidence Sources and Workloads}
\label{subsec:eval-evidence}

The evaluation combines controlled, external, targeted, and deployment evidence. The CrossTurnPII controlled corpus supplies exact bilingual values and source regions, text--image correspondence, five-turn entity identity, and component projections. It contains 4,800 prompts and 2,400 screenshots in 480 five-turn sessions. Its 800-image reader subset contains 5,632 gold entities and is selected without reader outputs by retaining complete personas and all four sessions associated with each selected persona. The structured-address stress split contains 7,800 forms over 240 address identities and varies completeness, order, abbreviation, OCR noise, and multi-block fragmentation. PrivScreen and AndroidLab provide external screenshot and task evidence. The fixed live-interface suite and randomized multi-root suite provide targeted REF capability and binding-policy evidence, while the Xiaomi 12S workload measures device-resident execution.

\paragraph{Construction independence and data handling.}
The controlled sources avoid collection and retention of participant or third-party PII. Author-directed model assistance supplied draft scenario wording, after which deterministic scripts instantiated values, spans, regions, entity links, and projections and produced the gold labels and measurements. The structured-address split reuses controlled persona records while inducing a distinct distribution over address forms. The randomized multi-root suite uses a separate generator, templates, and values. A type-aware audit checks all 33 detector lexicons against the evaluated records for exact full-name and high-specificity identifier overlap. The runtime detector and binder do not receive construction-derived source, identity, or projection labels. External benchmarks and the independent multi-root generator provide evidence outside the controlled rendering process. Appendix~\ref{sec:appendix-provenance} documents corpus construction, reader-subset selection, generator independence, lexicon auditing, and row-level provenance.

\subsection{Baselines and Comparison Protocols}
\label{subsec:eval-baselines}

\paragraph{Native system comparisons.}
AbI is the closest prior full-pipeline comparator. Its native detector output feeds deterministic placeholders, value-map reuse with fuzzy OCR alignment, and whole-value local restoration. The evaluated Presidio AnalyzerEngine (\textit{PrA}) and GLiNER detector exposed through Presidio's recognizer interface (\textit{PrG}) configurations render detected screenshot regions with an opaque fill and no readable replacement string, so their outputs are SAN-only. They do not provide a reference lifecycle and do not enter REF or Full comparisons. Tables label PrivacyGuard as \textit{Ours}. Native-system rows retain each system's complete evaluated frontend, rendering behavior, and lifecycle and report the resulting pipeline output.

\paragraph{Controlled SAN comparisons.}
The AbI SAN-only adapter retains AbI's native detector and detected regions but replaces its token overlay with an opaque fill and releases no token metadata. OCR-free inputs isolate the detector and structured interpreter from OCR behavior. Shared-OCR inputs give identical OCR records to every method and isolate source coverage, residual leakage, precision, and grounding from frontend differences. Native comparisons remain separate from these controls and retain each system's declared OCR frontend. Appendix~\ref{sec:appendix-baseline-audit} records the source snapshots, adapters, and mapping between configurations and estimands.

\paragraph{Matched lifecycle ablation.}
The PrivacyGuard ablation uses the AbI-style exact-value-reuse interface, labeled \textit{Exact} as a standalone condition and \textit{+Exact} as a method suffix in tables. It fixes PrivacyGuard's SAN candidates, typed records, renderer, reference syntax, and local restorer, while replacing structured same-entity binding with exact canonical-value reuse and disabling component projections. This comparison isolates the continuity and task effects of the structured reference lifecycle without changing the source-removal surface or local restoration implementation.

\paragraph{Execution profiles.}
All compared pipelines within each server-hosted experiment execute on the same RTX 4090 server to hold hardware placement constant. This server belongs to the trusted experimental apparatus, remains distinct from the remote planner, and does not constrain the intended placement of PrivacyGuard. Separately, a 16-task handset profile runs the packaged endpoint on a Xiaomi 12S, released in 2022, with a Snapdragon 8+ Gen 1 processor and 12\,GB RAM. The handset ran Android 15 during evaluation. This profile measures device-resident latency, memory, persistence, and completion. Appendix~\ref{sec:appendix-topology} records the server-hosted comparison profile and the handset profile.

\subsection{Endpoints, Readers, and Statistics}
\label{subsec:eval-protocol}

\emph{Primary system endpoints.}
For Full, the primary privacy endpoint is audited source-value leakage, defined as the deduplicated union of exact or partial per-view exposure and exact or partial five-turn reconstruction from the ordered Full uploads. The primary task endpoint is Full task success on AndroidLab. Privacy and task utility remain separate outcomes.

\emph{Mechanism endpoints.}
RQ1 measures SAN complete-source coverage and residual leakage before readable references are rendered. GenR measures coverage of gold source occurrences at the declared threshold, while $\mathrm{LR}_{\mathrm{char}}$ and $\mathrm{LR}_{\mathrm{detect}}$ measure residual source content in text and screenshot settings. Placeholder-free SAN-only views provide the rendered-view exposure comparison without charging readable reference content to SAN. RQ2 measures strict cross-form continuity, automatically verified reference-mediated completion, conditional re-exposure among SAN-protected entities, and targeted reconstruction from released references. The common-protected-set comparison aligns AbI and PrivacyGuard on gold entities protected by both SAN views. AndroidLab's SAN-only-to-Full contrast supports RQ2 mechanism attribution, while its complete Full-system comparisons are reported under RQ3.

\emph{Diagnostics.}
Det-P, BBoxR, and Mod characterize candidate precision, geometric grounding, and rewrite count. Chain reuse and projection fidelity diagnose structured continuity and rendering. Malformed-reference cases test the local-resolution boundary, while component timing and memory measurements characterize deployment cost. These diagnostics support mechanism attribution but are not substitutes for the Full privacy and task endpoints. Appendix~\ref{sec:appendix-metrics} gives the complete definitions and formulas.

CrossTurnPII and PrivScreen use two hosted readers from different model families, \texttt{qwen3.7-plus} (\textit{Qwen}) and \texttt{gemini-3-flash-preview} (\textit{Gemini}). Both use temperature 0. The first successful completion is cached for each predefined evaluation input tuple, which includes the reader, prompt, released-view input, and target when applicable. Within each audit, the reader configuration, prompts, retry policy, and scorer are fixed across the compared protection methods. CrossTurnPII includes released-view PII extraction, targeted single-reference reconstruction from Full, and targeted five-turn reconstruction from ordered Full uploads. PrivScreen retains its benchmark field-query privacy questions and normal visual questions and is not used to estimate the RQ2 REF increment. Appendix~\ref{sec:appendix-screenshot} reports the complete released-view results and diagnostic breakdowns. 

\paragraph{Uncertainty protocol.}
RQ1 prompt and structured-address intervals use 5,000 percentile-bootstrap resamples of complete personas with PRNG seed 1729, preserving correlated turns and variants. RQ2 method-matched conditional re-exposure rates, common-protected-set differences, and signed Full-minus-SAN-only contrasts use 10,000 persona-paired resamples. The signed contrasts remain secondary diagnostics. The same 10,000 persona-paired protocol estimates the RQ3 PrivacyGuard-minus-AbI common-gold audited-leakage differences. Each paired resample uses the same sampled personas for both compared methods. AndroidLab task-success differences use 10,000 task-paired resamples of the same 138 tasks, with the same sampled tasks used for both compared conditions in each resample. PrivScreen reports the fixed-reader absolute SAN and Full outcomes on its common benchmark questions rather than an estimated REF increment. Controlled mechanism, local-resolution, binding-conformance, and live-interface suites are reported as fixed-suite counts. The reported intervals condition on the cached reader outputs and retained planner trajectories. They quantify corpus- or task-resampling uncertainty rather than repeated hosted-model variability. Appendix~\ref{sec:appendix-statistics} gives the complete intervals, estimand mappings, and grouping keys.

\section{Evaluation Results}
\label{sec:evaluation}

\subsection{RQ1: SAN Source Removal and Released-View Effects}
\label{subsec:eval-text-address}

RQ1 evaluates whether SAN removes complete source-linked PII occurrences from text and screenshots. We first examine controlled text and structured-address inputs, then test rewritten screenshot pixels on CrossTurnPII and PrivScreen. Matched mechanism analyses explain the observed source-removal differences, while SAN-only released views measure value-independent visual utility.

\subsubsection{Controlled Source Removal}

\paragraph{Text and structured-address coverage.}
Text inputs isolate SAN before OCR and rendering. The structured-address stress split additionally varies completeness, order, abbreviation, OCR noise and multiblock fragmentation. Table~\ref{tab:eval-text-address-main} shows complete-source coverage and source-character leakage.

\begin{table}[t]
\centering
\caption{SAN source removal on controlled text and structured-address inputs.}
\label{tab:eval-text-address-main}
\begin{tabularx}{\linewidth}{@{}C{0.8}C{1.8}C{0.85}C{0.85}C{0.85}C{0.85}@{}}
\toprule
& & \multicolumn{2}{c}{English} & \multicolumn{2}{c}{Chinese} \\
\cmidrule(lr){3-4}\cmidrule(lr){5-6}
Input & Method &
GenR\newline @1.0$\uparrow$ &
$\mathrm{LR}_{\mathrm{char}}$\newline $\downarrow$ &
GenR\newline @1.0$\uparrow$ &
$\mathrm{LR}_{\mathrm{char}}$\newline $\downarrow$ \\
\midrule
\multirow{2}{*}{Text}
& Best baseline
& 0.7912 & 0.1283 & 0.5048 & 0.3998 \\
& Ours
& \best{0.9740} & \best{0.0125}
& \best{0.9971} & \best{0.0023} \\
\midrule
\multirow{4}{=}{Address\newline stress}
& AbI
& 0.3448 & 0.5094 & 0.0986 & 0.8921 \\
& PrA
& 0.1160 & 0.3968 & 0.1944 & 0.6628 \\
& PrG
& 0.6247 & 0.0537 & 0.6567 & 0.0114 \\
& Ours
& \best{0.8981} & \best{0.0430}
& \best{0.9572} & \best{0.0071} \\
\bottomrule
\end{tabularx}
\PGTableNote{Text baseline: PrG (EN), PrA (ZH). Appendix~\ref{sec:appendix-prompt} gives complete baselines and intervals.}
\end{table}

PrivacyGuard improves complete-source coverage and lowers residual leakage in both languages. On structured addresses, residual leakage falls from AbI's 0.5094 and 0.8921 to 0.0430 and 0.0071 in English and Chinese. Complete-address coverage rises from 0.3448 and 0.0986 to 0.8981 and 0.9572. In the text and address AbI comparisons reported in the main and appendix tables, residual leakage falls by 91.56\%--99.68\%. The complete grouped-bootstrap intervals appear in Appendix Table~\ref{tab:appendix-bootstrap-summary}. Type-stratified results retain the same main direction for conventional PII. Service identifiers produce a more varied ordering, with Presidio+rules leading the controlled English cell and PrivacyGuard leading the controlled Chinese and external PrivScreen cells. Appendix Tables~\ref{tab:appendix-prompt-per-type} and~\ref{tab:appendix-privscreen-per-type} report the complete strata.

\paragraph{Complete versus partial coverage.}
GenR@0.8 measures high-overlap coverage, while GenR@1.0 requires the complete gold source occurrence. On structured English prompts, AbI falls from 0.8212 at 0.8 to 0.5456 at 1.0. PrivacyGuard remains at 0.9794 and 0.9740. Its Chinese result is 0.9971 at both thresholds. These narrow gaps show that most covered occurrences in the controlled inputs are complete, not partial.

\subsubsection{Rendered Screenshot Removal}
\label{subsec:eval-screenshot}

Table~\ref{tab:eval-screenshot-san-main} reports source removal on controlled CrossTurnPII and external PrivScreen screenshots. Native results use each system's own frontend. Shared-OCR columns provide identical OCR records to all methods and isolate the downstream detector, interpreter, and rewriting path. Every endpoint is computed before REF renders readable references.

\begin{table}[t]
\centering
\caption{SAN source removal on controlled CrossTurnPII and external PrivScreen screenshots.}
\label{tab:eval-screenshot-san-main}
\begin{tabularx}{\linewidth}{@{}C{0.45}L{0.75}C{1.1}C{1}C{1.6}C{1.1}@{}}
\toprule
Lang. & Method &
Native\newline $\mathrm{LR}_{\mathrm{detect}}\downarrow$ &
Shared\newline Det-P$\uparrow$ &
Shared\newline GenR@1.0$\uparrow$ &
Shared\newline $\mathrm{LR}_{\mathrm{detect}}\downarrow$ \\
\midrule
\multicolumn{6}{@{}l}{\textit{Dataset: CrossTurnPII}} \\
EN & AbI
& 0.1850 & 0.3249 & 0.5521 & 0.3397 \\
EN & PrA
& 0.1733 & 0.4313 & 0.5629 & 0.0482 \\
EN & PrG
& 0.4454 & 0.4542 & 0.5486 & 0.3071 \\
EN & Ours
& \best{0.0316} & \best{0.8125} & \best{0.9113} & \best{0.0316} \\
\cmidrule(lr){1-6}
ZH & AbI
& 0.5673 & \best{0.9802} & 0.3720 & 0.5657 \\
ZH & PrA
& 0.4509 & 0.7545 & 0.4971 & 0.2576 \\
ZH & PrG
& 0.6632 & 0.5982 & 0.3743 & 0.6001 \\
ZH & Ours
& \best{0.0389} & 0.8258 & \best{0.9229} & \best{0.0389} \\
\midrule
\multicolumn{6}{@{}l}{\textit{Dataset: PrivScreen}} \\
-- & AbI
& 0.2018 & 0.6611 & 0.6743 & 0.1582 \\
-- & PrA
& 0.2131 & 0.4820 & 0.4200 & 0.1779 \\
-- & PrG
& 0.4947 & 0.4842 & 0.6705 & 0.4750 \\
-- & Ours
& \best{0.0724} & \best{0.9122} & \best{0.8600} & \best{0.0724} \\
\bottomrule
\end{tabularx}
\PGTableNote{Native=native OCR frontend; Shared=common PP-OCRv5-Mobile records. Appendix~\ref{sec:appendix-screenshot} gives full results.}
\end{table}

On CrossTurnPII, PrivacyGuard has the lowest native residual leakage and highest shared-OCR complete-source coverage in both languages. PrivScreen shows the same trend externally. PrivacyGuard lowers native residual leakage to 0.0724 and reaches shared-OCR GenR@1.0 of 0.8600. Its PrivScreen Det-P is 0.9122 and BBoxR@0.5 is 0.9452. With the same OCR input, modified regions per changed image fall from AbI's 2.5224 to 2.3116. On PrivScreen, higher coverage and precision with fewer modified regions indicate more focused rewriting.

\subsubsection{Mechanism Attribution and Released-View Effects}

\paragraph{Matched SAN mechanisms.}
Table~\ref{tab:eval-san-mechanisms-main} changes one SAN mechanism at a time under fixed inputs. Removing shared adjudication reduces English-address GenR@1.0 by 0.5666. Removing OCR-stream reconstruction reduces PrivScreen GenR@1.0 by 0.1381 and Det-P by 0.2362. Enrolling first-pass misses in the optional local privacy repository raises English and Chinese prompt GenR@1.0 to 0.9867 and 1.0000. All cross-system headline results keep the repository empty. The split-block diagnostic confirms that the shared interpreter can retain linked source extents and one-root chains when addresses span several OCR regions. Exact-value lifecycles cannot reuse one root for fragments with different canonical surfaces. Appendix Table~\ref{tab:eval-address-split-main} reports the complete language and block-count results. Appendix Table~\ref{tab:appendix-screenshot-ocr-matrix} reports sensitivity to four OCR backends.

\begin{table}[t]
\centering
\caption{Matched SAN mechanism tests with same inputs.}
\label{tab:eval-san-mechanisms-main}
\begin{tabularx}{\linewidth}{@{}L{1}L{2}C{0.5}C{0.5}@{}}
\toprule
Mechanism & Endpoint & Without & With \\
\midrule
Adjudication
& EN address GenR@1.0$\uparrow$
& 0.3315 & \best{0.8981} \\
\multirow{2}{=}{OCR recon.}
& PrivScreen GenR@1.0$\uparrow$
& 0.7219 & \best{0.8600} \\
& PrivScreen Det-P$\uparrow$
& 0.6760 & \best{0.9122} \\
\multirow{2}{=}{Local repo.}
& EN prompt GenR@1.0$\uparrow$
& 0.9740 & \best{0.9867} \\
& ZH prompt GenR@1.0$\uparrow$
& 0.9971 & \best{1.0000} \\
\bottomrule
\end{tabularx}
\PGTableNote{The local-repository rows are matched before/after results; all headline comparisons keep the repository empty.}
\end{table}

\paragraph{SAN-only exposure and visual utility.}
Removing readable references isolates the released-view effect of SAN. On CrossTurnPII, PrivacyGuard SAN-only exposure is 0.1289 and 0.0437 under Qwen for English and Chinese, and 0.1332 and 0.0532 under Gemini. AbI SAN-only exposure is 0.2775 and 0.7513 under Qwen, and 0.2800 and 0.7598 under Gemini. On PrivScreen, PrivacyGuard SAN-only exposure is 0.0585 under Qwen and 0.0576 under Gemini, compared with 0.1992 and 0.2096 for AbI. PrivacyGuard's visual-answer accuracy is 0.916 and 0.994. Relative to the original views, these values differ by $+0.4$ and $+2.6$ percentage points. The benchmark questions request controls and navigation targets rather than protected values. These results measure value-independent released-view utility and do not imply that erase-only SAN preserves tasks that require a protected value. Appendix Tables~\ref{tab:appendix-crossturnpii-complete} and~\ref{tab:appendix-privscreen-complete} retain the complete SAN and Full rows.

RQ1 shows that SAN improves complete-source removal on controlled and external inputs, while erase-only visual utility remains close to the original views on PrivScreen’s value-independent questions.

\subsection{RQ2: REF Capability, Continuity, and Reconstruction Cost}
\label{subsec:eval-reference}

RQ2 evaluates whether REF creates a locally resolvable task path after SAN removes a required value, whether its structured lifecycle preserves changed appearances, and how much reconstruction exposure the released references introduce.

\subsubsection{Verified Reference-Mediated Utility Recovery}

\paragraph{Reference-required tasks.}
The fixed live-interface suite contains tasks selected in advance to require a protected phone number, full address, or address component. A run succeeds only when the final action submits one trajectory-issued target reference with the required projection and local restoration produces the gold value or component. Plaintext guesses do not count. Under this reference-required scoring rule, SAN-only has a structural value of zero because it releases neither the raw target nor a trajectory-issued reference. Full completes 17/30 Qwen runs and 20/30 Gemini runs. Exact-value reuse completes 8/30 and 7/30. On the 18 component-only runs per planner, exact-value reuse completes none, while Full completes 6 under Qwen and 12 under Gemini.

\begin{table}[t]
\centering
\caption{REF capability and continuity on separate workloads. Panels use different denominators and are not pooled.}
\label{tab:eval-ref-capability-continuity}

\textit{Panel A. Strictly verified task success}

\begin{tabularx}{\linewidth}{@{}L{1}C{1}C{1}C{1}C{1}@{}}
\toprule
Planner & SAN & Local & Exact & Full \\
\midrule
Qwen
& 0  & 0/30 & 8/30 & \best{17/30} \\
Gemini
& 0  & 1/30 & 7/30 & \best{20/30} \\
\bottomrule
\end{tabularx}

\textit{Panel B. Cross-form address continuity}

\begin{tabularx}{\linewidth}{@{}L{1.1}C{0.9}C{0.9}C{1.05}C{1.05}@{}}
\toprule
Method & EN PCR & ZH PCR & EN Chain & ZH Chain \\
\midrule
AbI
& 0.1465 & 0.0000 & 0.0000 & 0.0000 \\
PrA+Exact
& 0.0346 & 0.0074 & 0.0000 & 0.0000 \\
PrG+Exact
& 0.0499 & 0.0074 & 0.0000 & 0.0000 \\
Ours+Exact
& 0.0361 & 0.0486 & 0.0000 & 0.0000 \\
Ours Full
& \best{0.8084} & \best{0.8175}
& \best{0.8708} & \best{0.6021} \\
\bottomrule
\end{tabularx}
\PGTableNote{Panel A uses ten templates with three runs each. Local denotes turn-local references. SAN is structurally zero. Success requires an issued root, required projection, and correct local resolution. Strict PCR treats unbound appearances as inconsistent. Chain averages two-/three-fragment reuse.}
\end{table}

\paragraph{AndroidLab utility recovery.}
PrivacyGuard performs 3{,}527 occurrence-level replacements in its AndroidLab trajectories. Starting from SAN-only, Full raises Qwen3.7-Plus SR from 50.00\% to 56.52\% and Gemini-3-Flash-Preview SR from 60.14\% to 63.04\%. The paired changes are $+6.52$ points with 20 gains, 11 losses, and 107 ties, and $+2.90$ points with 11 gains, 7 losses, and 120 ties. Their 95\% intervals are $[-1.45,+14.49]$ and $[-2.90,+8.70]$. The point estimates and strictly verified tasks support utility recovery for a subset of tasks, but both AndroidLab intervals include zero.

\paragraph{Local-resolution contract.}
The resolver restores every evaluated valid root, projection, history-reuse case, and supported normalized variant and rejects every evaluated unsupported or malformed input without releasing a raw value. Appendix Table~\ref{tab:eval-restore-main} and Appendix~\ref{sec:appendix-restore-cases} report the coverage families and complete rejection cases.

\subsubsection{Cross-Form Continuity and Binding}

The 7,800-form address split evaluates cross-system continuity and the matched lifecycle ablation. PrivacyGuard reaches strict PCR@0.80 of 0.8084 in English and 0.8175 in Chinese. The matched exact-value-reuse ablation reaches 0.0361 and 0.0486 with unchanged GenR and $\mathrm{LR}_{\mathrm{char}}$. Its chain reuse is zero. Full retains substantially higher two- and three-block chain continuity because fragments with different canonical surfaces can reuse one structured entity root. Panel B of Table~\ref{tab:eval-ref-capability-continuity} reports the aggregate results, and Appendix~\ref{sec:appendix-address} reports the complete factorization and projection analysis.

The 480 controlled sessions show 0/4{,}352 binder-selected root misattachments and 0/2{,}031 impure chains. In 6{,}000 generator-conditioned gold continuations, recent-first selects the intended root in 5{,}286 trials, versus 4{,}236 for fresh-on-ambiguity. Recent-first still selects a non-gold compatible root in 714 trials. These outcomes characterize policy--intent agreement under the declared generator and do not establish a general false-merge guarantee. Appendix~\ref{sec:appendix-binding-risk} reports exhaustive implementation conformance, the complete generator-conditioned table, and the dataset-conditioned policy analysis.

\subsubsection{Incremental Reconstruction Cost}

On each method's own SAN-protected population, Full newly exposes 260/2{,}724 entities for AbI and 133/5{,}148 for PrivacyGuard under Qwen. The corresponding Gemini counts are 215/2{,}693 and 139/5{,}109. These denominators reflect different SAN-protected populations. Panel A of Table~\ref{tab:eval-ref-reconstruction-cost} compares the methods on gold entities protected by both SAN views.

\begin{table}[t]
\centering
\caption{REF reconstruction cost on CrossTurnPII.}
\label{tab:eval-ref-reconstruction-cost}
\textit{(a) Conditional re-exposure on the common protected set}
\begin{tabularx}{\linewidth}{@{}C{0.7}C{0.4}C{0.8}C{0.4}C{0.4}C{3.3}@{}}
\toprule
Reader & Lang. & Protected & AbI (\%) & Ours (\%) & $\Delta$ [95\% CI] \\
\midrule

\multirow{3}{*}{Qwen}
& All
& 2,511
& 8.56
& \best{0.68}
& $-7.89$ [$-10.18,-5.58$] \\

& EN
& 1,836
& 10.95
& \best{0.60}
& $-10.35$ [$-13.11,-7.89$] \\

& ZH
& 675
& 2.07
& \best{0.89}
& $-1.19$ [$-2.79,+0.17$] \\

\multirow{3}{*}{Gemini}
& All
& 2,479
& 6.62
& \best{0.56}
& $-6.05$ [$-7.74,-4.26$] \\

& EN
& 1,828
& 8.42
& \best{0.66}
& $-7.77$ [$-9.86,-5.74$] \\

& ZH
& 651
& 1.54
& \best{0.31}
& $-1.23$ [$-2.08,-0.47$] \\

\bottomrule
\end{tabularx}

\textit{(b) Exact reconstruction from emitted references}

\begin{tabularx}{\linewidth}{@{}C{0.8}C{1.05}C{1.05}C{1.05}C{1.05}@{}}
\toprule
&
\multicolumn{2}{c}{Single view}
&
\multicolumn{2}{c}{Five turns} \\
\cmidrule(lr){2-3}
\cmidrule(lr){4-5}

Reader
& AbI
& Ours
& AbI
& Ours \\
\midrule

Qwen
& 0.0602
& \best{0.0130}
& 0.0622
& \best{0.0299} \\

Gemini
& 0.0716
& \best{0.0159}
& 0.0644
& \best{0.0273} \\

\bottomrule
\end{tabularx}

\PGTableNote{\emph{Note.} Panel (a) conditions on the common SAN-protected set. Panel (b) uses method-conditioned emitted-reference targets. Populations and denominators differ; panels are not pooled.}
\end{table}

On the common protected set, pooled conditional re-exposure is 0.68\% for PrivacyGuard and 8.56\% for AbI under Qwen. The corresponding Gemini rates are 0.56\% and 6.62\%. The pooled differences under both readers exclude zero. Among the four language-specific comparisons, three intervals exclude zero, while the Qwen Chinese interval includes zero. Exact reconstruction from PrivacyGuard references remains nonzero. Its pooled single-view rates are 0.0130 and 0.0159 under Qwen and Gemini. Its five-turn rates are 0.0299 and 0.0273. These rates are lower on the evaluated method-conditioned reference surfaces than the corresponding AbI rates, but they confirm that structured references introduce measurable reconstruction information. Binding correctness remains a separate outcome because changing only the gold intended-root label does not alter a fixed released surface or a fixed reader's reconstruction input.

RQ2 shows that REF preserves a locally resolvable task path and strongly improves cross-form continuity, while its released references retain measurable reconstruction cost.

\subsection{RQ3: Full-Pipeline Privacy and Task Outcomes}
\label{subsec:eval-complete}

RQ3 reports the complete Full pipelines. Section~\ref{subsec:eval-text-address} reports the corresponding SAN-only views. This subsection evaluates audited source-value leakage, final-view privacy and visual utility, and complete-pipeline task success. Privacy and task utility remain separate endpoints.

\subsubsection{Controlled CrossTurnPII Full Upload Views}

Table~\ref{tab:eval-full-leakage} reports the primary audited source-value leakage endpoint on identical common-gold sessions and denominators. The complete single-view SAN and Full rows appear in Appendix Table~\ref{tab:appendix-crossturnpii-complete}. PrivacyGuard has lower audited source-value leakage in all four reader--language cells, and every paired interval excludes zero. Relative to AbI, the reductions range from 33.55\% to 82.60\%. The five-turn reconstruction probe adds little after deduplication. Appendix Table~\ref{tab:appendix-full-union-decomposition} reports the per-view and five-turn union values directly. Appendix Table~\ref{tab:appendix-alr-monotone-sensitivity} additionally charges PrivacyGuard for any value exposed by either SAN-only or Full, and all four conservative sensitivity intervals remain below zero.

\begin{table}[t]
\centering
\caption{CrossTurnPII audited source-value leakage on common-gold five-turn Full uploads. $\Delta$ is Ours minus AbI.}
\label{tab:eval-full-leakage}
\begin{tabularx}{\linewidth}{@{}C{0.3}C{0.55}C{0.5}C{0.5}C{3.15}@{}}
\toprule
Lang. & Reader & AbI & Ours & $\Delta$ [95\% CI] \\
\midrule
EN & Qwen
& 0.3486 & \best{0.2080}
& $-0.1406$ [$-0.1698,-0.1080$] \\
EN & Gemini
& 0.3271 & \best{0.2174}
& $-0.1098$ [$-0.1429,-0.0750$] \\
ZH & Qwen
& 0.7022 & \best{0.1222}
& $-0.5801$ [$-0.6126,-0.5458$] \\
ZH & Gemini
& 0.7163 & \best{0.1426}
& $-0.5737$ [$-0.6049,-0.5431$] \\
\bottomrule
\end{tabularx}
\PGTableNote{$N=1{,}394$ per EN row and $N=1{,}424$ per ZH row. Audited leakage deduplicates per-view exposure and five-turn reconstruction. Intervals use paired persona resampling.}
\end{table}

\subsubsection{External PrivScreen Full Views}

PrivScreen evaluates final released views on a captured external screenshot distribution. Table~\ref{tab:eval-privscreen-full} compares the two Full pipelines. The complete Original, SAN-only, and Full configurations remain in Appendix Table~\ref{tab:appendix-privscreen-complete}. PrivacyGuard lowers exposure from 0.2134 to 0.0793 under Qwen and from 0.2229 to 0.0831 under Gemini. These correspond to reductions of 62.84\% and 62.72\%. Reader-matched visual accuracy differs from AbI by $-0.40$ and $+1.60$ percentage points. These questions measure the usability of the final released screenshot and do not exercise planner return or local restoration.

\begin{table}[t]
\centering
\caption{PrivScreen privacy exposure and visual-answer utility for the two Full pipelines.}
\label{tab:eval-privscreen-full}
\begin{tabularx}{\linewidth}{@{}L{1.5}C{0.97}C{0.78}C{0.97}C{0.78}@{}}
\toprule
& \multicolumn{2}{c}{Qwen}
& \multicolumn{2}{c}{Gemini} \\
\cmidrule(lr){2-3}\cmidrule(lr){4-5}
Method &
Exposure$\downarrow$ & Utility$\uparrow$ &
Exposure$\downarrow$ & Utility$\uparrow$ \\
\midrule
AbI Full
& 0.2134 & \best{0.914}
& 0.2229 & 0.950 \\
Ours Full
& \best{0.0793} & 0.910
& \best{0.0831} & \best{0.966} \\
\bottomrule
\end{tabularx}
\PGTableNote{Exposure and utility use PrivScreen privacy and visual questions. Bold marks the better endpoint between the Full pipelines. Privacy and utility remain separate outcomes.}
\end{table}

\subsubsection{AndroidLab Full Task Utility}
\label{subsec:eval-android}

AndroidLab evaluates task success on the same 138 tasks for four planners. A retained-output audit confirms that both protection pipelines repeatedly process private-content surfaces in this workload. Appendix Table~\ref{tab:eval-android-protection-audit} reports the review protocol and complete counts. PrivacyGuard's point estimate exceeds AbI for Gemini-3-Flash-Preview, Qwen3.7-Max, and GPT-4o, and ties AbI for Qwen3.7-Plus. The paired PrivacyGuard-minus-AbI differences are $+5.80$ [$+0.72,+10.87$], $0.00$ [$-7.25,+7.25$], $+0.72$ [$-5.07,+6.52$], and $+3.62$ [$-4.35,+11.59$] percentage points in the table order. Only the Gemini-3-Flash-Preview interval excludes zero. Relative to unprotected execution, PrivacyGuard Full differs by $-3.62$ to $+2.17$ points.

\begin{table}[t]
\centering
\caption{AndroidLab task success on the same 138 tasks.}
\label{tab:eval-android-main}
\begin{tabularx}{\linewidth}{@{}L{1.95}C{0.6}C{0.7}C{0.75}@{}}
\toprule
Planner & Original & AbI Full & Ours Full \\
\midrule
Gemini-3-Flash-Preview
& 60.87 & 57.25 & \best{63.04} \\
Qwen3.7-Plus
& 55.80 & \best{56.52} & \best{56.52} \\
Qwen3.7-Max
& 58.70 & 55.07 & \best{55.80} \\
GPT-4o
& 45.65 & 38.41 & \best{42.03} \\
\bottomrule
\end{tabularx}
\PGTableNote{Values are task-success percentages. Bold compares the two protection pipelines within each planner. Appendix Table~\ref{tab:appendix-android-complete} retains the Ours SAN-only rows used for RQ2.}
\end{table}

RQ3 shows lower measured upload-view exposure for PrivacyGuard Full and equal or higher task-success point estimates than AbI Full under the four evaluated planners, although only one paired task-success interval excludes zero.

\subsection{RQ4: Packaged-Prototype Feasibility and Runtime Cost}
\label{subsec:eval-runtime}

All 16 Xiaomi 12S runs completed without a process-level failure, and 14 completed the GUI task. The runs produced 182 sanitize calls and 87 restore calls. Mean upload-stage latency was 1378.4\,ms, of which OCR accounted for 1142.6\,ms; entity binding averaged 18.6\,ms, and return resolution and restoration averaged 15.5\,ms. In the separate server profile, PrivacyGuard had the lowest peak RSS among the protected pipelines at 2468.0\,MiB, compared with 2671.5--3551.5\,MiB, and its checkpoint-free detector had a 56.3\,ms p50.

\begin{table}[t]
\centering
\caption{Packaged Android-prototype latency and memory.}
\label{tab:eval-runtime-cost}
\begin{tabularx}{\linewidth}{@{}L{1.3}R{0.7}@{}}
\toprule
Metric & Value \\
\midrule
OCR latency, mean & 1142.6 ms \\
Preprocessing latency, mean & 60.5 ms \\
Detection latency, mean & 35.7 ms \\
Entity-binding latency, mean & 18.6 ms \\
Reference-rendering latency, mean & 120.9 ms \\
Return resolution and restore, mean & 15.5 ms \\
Return resolution and restore, median & 6.2 ms \\
Return resolution and restore, p95 & 47.3 ms \\
Ready-state USS, mean & 421.9 MiB \\
Working-state USS, mean / p95 & 969.0 / 1042.4 MiB \\

\bottomrule
\end{tabularx}
\PGTableNote{USS is private dirty plus private clean memory from \texttt{dumpsys meminfo}. Ready state averages ten temporal samples after one startup; working state summarizes 182 post-sanitize snapshots from the persistent process. Neither is an in-call peak.}
\end{table}

RQ4 establishes packaged execution on the evaluated handset, with OCR dominating upload latency.


\section{Related Work}
\label{sec:related-work}

GUI agents ground instructions in screen observations and execute multi-step actions on mobile, desktop and web interfaces~\cite{zhang2025appagent,li2024appagent,jiang2025appagentx,wang2024mobile,wang2024mobilev2,ye2025mobilev3,xu2026mobilev35,qin2025uitars}. Prior work notes trustworthiness and privacy risks in GUI interaction~\cite{shi2025towards,chen2025obvious}. In cloud-backed agents, instructions and UI state may be transmitted to remote models, making the device-to-cloud release point a privacy boundary~\cite{fan2026core,shen2026caped,wang2026guiguard,zhao2026anonymization}. PrivacyGuard mediates this boundary along the prompt--screenshot--return path.

GUI-specific defenses reduce or transform cloud-visible observations. CORE selects task-relevant UI blocks, MMPro exposes abstract representations to cloud reasoning while keeping perception and execution local, and PrivWeb combines local anonymization with user control~\cite{fan2026core,wu2025mmpro,zhang2026privweb}. DualTAP perturbs PII regions, whereas CAPED masks task-incidental content while retaining required evidence~\cite{zhang2025dualtap,shen2026caped}. GUIGuard-Bench evaluates privacy recognition, protection, and protected execution~\cite{wang2026guiguard}. These works govern which evidence is exposed or transformed; PrivacyGuard protects task-relevant PII that must remain referable without releasing its raw value.

Text-centric privacy work spans entity hiding and restoration~\cite{chen2023has,zeng2025privacyrestore}, differentially private prompt or embedding perturbation~\cite{tong2025inferdpt,mai2023splitdenoise}, private prompt learning~\cite{duan2023flocks}, comparative evaluation of protection mechanisms~\cite{agyemang2026llmredactor}, and type-consistent surrogate substitution~\cite{jathanna2026surrogateshield}. Presidio and GLiNER exemplify rule- and context-aware PII detection and open-label named-entity recognition, while PIIBench and WebPII benchmark PII detection in text and web screenshots, respectively~\cite{MsPresidio,zaratiana2024gliner,jha2026piibench,zhao2026webpii}. PrivacyGuard carries accepted occurrences beyond per-input detection and protection into source-linked rewriting and a session-scoped reference lifecycle.

Adjacent structural work addresses different tasks. ScreenAI, OmniParser, and SeeClick use learned vision or detection models to annotate or ground UI elements, while DiffSTR uses diffusion-based scene-text erasure~\cite{baechler2024screenai,lu2024omniparser,cheng2024seeclick,pathak2024diffstr}. Address systems respectively use prompt-driven LLM parsing and OCR-based BM25 retrieval with BERT reranking to structure or match records~\cite{tarannum2026system,gupta2021improvement}. PrivacyGuard instead applies deterministic geometry and local type-specific rules over OCR output to reconstruct source extents, bind address components, and resolve projections.

Among prior systems, AbI is the closest prior full-pipeline value-anonymization system~\cite{zhao2026anonymization}. It combines GLiNER-family detection with regex validation, replaces detected PII with deterministic type-preserving placeholders, and maintains a session-level value map with exact lookup and fuzzy OCR alignment. A local proxy resolves placeholders in agent commands before execution. PrivacyGuard's SAN instead reconstructs fragmented OCR flow, adjudicates partial or overlapping hypotheses, and emits source-linked extents. REF binds non-identical partial forms to entity roots and supports locally resolved component projections, providing a component-bearing lifecycle beyond AbI's value-level map.

\section{Limitations}
\label{sec:limitations}

\textbf{Empirical scope.}
The controlled corpus provides exact gold labels, while the structured-address split varies forms over the same source personas. The structured-address suite stress-tests coverage over declared transformation families and does not estimate their prevalence in natural GUI-agent traffic. PrivScreen provides external screenshots and value-independent questions. AndroidLab contributes public GUI tasks and retained protection outputs along complete trajectories. The generated ambiguity and fixed live-interface suites stress binding and reference-required execution. The source-value leakage audit covers two hosted readers from distinct model families, exact or partial recovery from released upload views, and targeted exact reconstruction from released type, root equality, projection, layout, and optional length cues. Gemini is identified by the recorded OpenRouter model string, request tuple, and cached responses. AndroidLab intervals resample tasks over fixed planner outputs. Repeated hosted-model runs remain a separate source of variation.

\textbf{Deployment and security scope.}
Compared protection pipelines run on the same RTX 4090, and the Xiaomi 12S workload establishes single-device packaged-prototype feasibility. Checkpoint-free denotes the PII detector. The screenshot path uses learned OCR. In normal use, evaluation measures confidentiality on individual released views and ordered five-turn Full upload histories. PrivacyGuard does not address planners that deliberately vary returned actions to infer local session state from later GUI feedback. Action authorization and execution integrity are separate GUI-agent properties. The complementary SAN-only and Full results motivate future task-aware reference release while preserving the same source-sanitization boundary. The current projection contract is address-centric. Extending it to other PII types and evaluating more applications and devices remain future work.

\section{Conclusion}
\label{sec:conclusion}

PrivacyGuard addresses the private reference problem by combining source-linked sanitization with a structured reference lifecycle. Source traces let SAN rewrite complete source extents, while typed entity records let REF maintain cross-form continuity, resolve supported projections locally, and restore values without returning them to the upload path. On common-gold five-turn uploads, PrivacyGuard lowers audited source-value leakage by 33.55\%--82.60\% relative to AbI Full in all reader--language cells. External PrivScreen results show lower final-view exposure, while REF raises strict cross-form continuity to about 0.81 and improves verified reference-required completion over exact-value reuse. AndroidLab shows equal or higher task-success point estimates than AbI Full for all four planners, with one paired interval excluding zero.

\clearpage
\appendix
\section*{Ethical Considerations}
\label{sec:ethical-considerations}

This study recruits no participants, performs no human-subject intervention, and uses no captured personal conversations. Controlled personas provide the exact values, source regions, within-session identities, and projections required for scoring without collecting or retaining participant or third-party PII. Persona-like values are nevertheless handled as benchmark-sensitive data. Public benchmarks and Android environments are used under their source access conditions, and hosted endpoints receive benchmark inputs rather than newly collected participant data.

The evaluated risks include raw-content disclosure after missed protection, task-context loss after over-detection, and action on the wrong protected entity after ambiguous binding. Type, equality, projection, length, and layout metadata can also support contextual inference. Raw mappings and resolution remain in the trusted mediation tier. Released files are screened for credentials and machine-local identifying paths, while third-party data remains under its original license and access terms.

\paragraph{Generative-AI disclosure.}
Author-directed generative-model assistance was used for controlled scenario and dialogue wording, followed by curation. Deterministic scripts instantiated gold values, spans, and entity links and produced pass/fail labels and measurements. GPT-5.6-Sol assisted the disclosed three-way AndroidLab detector-output audit against retained screen and accessibility-tree evidence. One author manually reviewed all 557 final deduplicated surfaces. The release includes the audit-form generator, adjudication schema, and aggregation code; screen-level evidence and the completed ledger are not redistributed. This audit reports method-conditioned output composition. OpenAI Codex (GPT-5) assisted manuscript restructuring and language revision. The authors verified the technical claims, citations, tables, and conclusions.

\section*{Open Science}
\label{sec:open-science}

\noindent\textbf{Anonymous source:} \url{https://github.com/betterzian/PrivacyGuard}. \textbf{Anonymous data and results:} \url{https://osf.io/7jsvq/overview?view_only=d71631e402ea4d36b547d18e5c4dd341}. These materials provide the implementation, fixed configurations, retained evaluation inputs, and analysis code needed to audit the reported results. Appendix~\ref{sec:appendix-provenance} gives the exact inventory, provenance, and redistribution boundaries.

\bibliographystyle{plainurl}
\bibliography{references}

@article{baechler2024screenai,
  title={Screenai: A vision-language model for ui and infographics understanding},
  author={Baechler, Gilles and Sunkara, Srinivas and Wang, Maria and Zubach, Fedir and Mansoor, Hassan and Etter, Vincent and C{\u{a}}rbune, Victor and Lin, Jason and Chen, Jindong and Sharma, Abhanshu},
  journal={arXiv preprint arXiv:2402.04615},
  year={2024}
}

@article{chen2023has,
  title={Hide and seek (has): A lightweight framework for prompt privacy protection},
  author={Chen, Yu and Li, Tingxin and Liu, Huiming and Yu, Yang},
  journal={arXiv preprint arXiv:2309.03057},
  year={2023}
}

@inproceedings{cheng2024seeclick,
  title={Seeclick: Harnessing gui grounding for advanced visual gui agents},
  author={Cheng, Kanzhi and Sun, Qiushi and Chu, Yougang and Xu, Fangzhi and YanTao, Li and Zhang, Jianbing and Wu, Zhiyong},
  booktitle={Proceedings of the 62nd Annual Meeting of the Association for Computational Linguistics (Volume 1: Long Papers)},
  pages={9313--9332},
  year={2024}
}

@article{duan2023flocks,
  title={Flocks of stochastic parrots: Differentially private prompt learning for large language models},
  author={Duan, Haonan and Dziedzic, Adam and Papernot, Nicolas and Boenisch, Franziska},
  journal={Advances in Neural Information Processing Systems},
  volume={36},
  pages={76852--76871},
  year={2023}
}

@article{mai2023splitdenoise,
  title={Split-and-denoise: Protect large language model inference with local differential privacy},
  author={Mai, Peihua and Yan, Ran and Huang, Zhe and Yang, Youjia and Pang, Yan},
  journal={arXiv preprint arXiv:2310.09130},
  year={2023}
}

@article{fan2026core,
  title={CORE: Reducing UI exposure in mobile agents via collaboration between cloud and local LLMs},
  author={Fan, Gucongcong and Niu, Chaoyue and Lyu, Chengfei and Wu, Fan and Chen, Guihai},
  journal={Advances in Neural Information Processing Systems},
  volume={38},
  pages={68690--68726},
  year={2026}
}

@article{jha2026piibench,
  title={PIIBench: A unified multi-source benchmark corpus for personally identifiable information detection},
  author={Jha, Pritesh},
  journal={arXiv preprint arXiv:2604.15776},
  year={2026}
}

@article{lu2024omniparser,
  title={Omniparser for pure vision based gui agent},
  author={Lu, Yadong and Yang, Jianwei and Shen, Yelong and Awadallah, Ahmed},
  journal={arXiv preprint arXiv:2408.00203},
  year={2024}
}

@article{shi2025towards,
  title={Towards trustworthy gui agents: A survey},
  author={Shi, Yucheng and Yu, Wenhao and Huang, Jingyuan and Yao, Wenlin and Chen, Wenhu and Liu, Ninghao},
  journal={arXiv preprint arXiv:2503.23434},
  year={2025}
}

@misc{chen2025obvious,
  title={The obvious invisible threat: Llm-powered gui agents’ vulnerability to fine-print injections},
  author={Chen, Chaoran and Zhang, Zhiping and Guo, Bingcan and Ma, Shang and Khalilov, Ibrahim and Gebreegziabher, Simret and Ye, Yanfang and Xiao, Ziang and Yao, Yaxing and Li, Tianshi and others},
  booktitle={Soups},
  year={2025},
  organization={The Twenty-First Symposium on Usable Privacy and Security (SOUPS)}
}

@misc{MsPresidio,
  AUTHOR={Mendels, Omri and Peled, Coby and Vaisman Levy, Nava and Hart, Sharon and Rosenthal, Tomer and Lahiani, Limor and others},
  TITLE={{Microsoft Presidio}: Context aware, pluggable and customizable PII anonymization service for text and images},
  YEAR={2018},
  ORGANIZATION={Microsoft},
  URL={https://microsoft.github.io/presidio},
  NOTE={Accessed: 2026-05-12}
}

@article{pathak2024diffstr,
  title={Diffstr: Controlled diffusion models for scene text removal},
  author={Pathak, Sanhita and Kaushik, Vinay and Lall, Brejesh},
  journal={arXiv preprint arXiv:2410.21721},
  year={2024}
}

@article{qin2025uitars,
  title={Ui-tars: Pioneering automated gui interaction with native agents},
  author={Qin, Yujia and Ye, Yining and Fang, Junjie and Wang, Haoming and Liang, Shihao and Tian, Shizuo and Zhang, Junda and Li, Jiahao and Li, Yunxin and Huang, Shijue and others},
  journal={arXiv preprint arXiv:2501.12326},
  year={2025}
}

@article{tarannum2026system,
  title={A System for Name and Address Parsing with Large Language Models},
  author={Tarannum, Adeeba and Mohammed, Muzakkiruddin Ahmed and Cakmak, Mert Can and Mandalawi, Shames Al and Talburt, John},
  journal={arXiv preprint arXiv:2601.18014},
  year={2026}
}

@inproceedings{gupta2021improvement,
  title={Improvement in semantic address matching using natural language processing},
  author={Gupta, Vansh and Gupta, Mohit and Garg, Jai and Garg, Nitesh},
  booktitle={2021 2nd International Conference for Emerging Technology (INCET)},
  pages={1--5},
  year={2021},
  organization={IEEE}
}

@article{tong2025inferdpt,
  title={Inferdpt: Privacy-preserving inference for closed-box large language models},
  author={Tong, Meng and Chen, Kejiang and Zhang, Jie and Qi, Yuang and Zhang, Weiming and Yu, Nenghai and Zhang, Tianwei and Zhang, Zhikun},
  journal={IEEE Transactions on Dependable and Secure Computing},
  volume={22},
  number={5},
  pages={4625--4640},
  year={2025},
  publisher={IEEE}
}

@article{wang2024mobile,
  title={Mobile-agent: Autonomous multi-modal mobile device agent with visual perception},
  author={Wang, Junyang and Xu, Haiyang and Ye, Jiabo and Yan, Ming and Shen, Weizhou and Zhang, Ji and Huang, Fei and Sang, Jitao},
  journal={arXiv preprint arXiv:2401.16158},
  year={2024}
}

@article{wang2024mobilev2,
  title={Mobile-agent-v2: Mobile device operation assistant with effective navigation via multi-agent collaboration},
  author={Wang, Junyang and Xu, Haiyang and Jia, Haitao and Zhang, Xi and Yan, Ming and Shen, Weizhou and Zhang, Ji and Huang, Fei and Sang, Jitao},
  journal={Advances in Neural Information Processing Systems},
  volume={37},
  pages={2686--2710},
  year={2024}
}

@article{ye2025mobilev3,
  title={Mobile-agent-v3: Fundamental agents for gui automation},
  author={Ye, Jiabo and Zhang, Xi and Xu, Haiyang and Liu, Haowei and Wang, Junyang and Zhu, Zhaoqing and Zheng, Ziwei and Gao, Feiyu and Cao, Junjie and Lu, Zhengxi and others},
  journal={arXiv preprint arXiv:2508.15144},
  year={2025}
}

@article{xu2026mobilev35,
  title={Mobile-agent-v3. 5: Multi-platform fundamental gui agents},
  author={Xu, Haiyang and Zhang, Xi and Liu, Haowei and Wang, Junyang and Zhu, Zhaozai and Zhou, Shengjie and Hu, Xuhao and Gao, Feiyu and Cao, Junjie and Wang, Zihua and others},
  journal={arXiv preprint arXiv:2602.16855},
  year={2026}
}

@article{wang2026guiguard,
  title={Guiguard: Toward a general framework for privacy-preserving gui agents},
  author={Wang, Yanxi and Zhang, Zhiling and Zhou, Wenbo and Zhang, Weiming and Zhang, Jie and Zhu, Qiannan and Shi, Yu and Zheng, Shuxin and He, Jiyan},
  journal={arXiv e-prints},
  pages={arXiv--2601},
  year={2026}
}

@article{zhao2026webpii,
  title={WebPII: Benchmarking Visual PII Detection for Computer-Use Agents},
  author={Zhao, Nathan},
  journal={arXiv preprint arXiv:2603.17357},
  year={2026}
}

@inproceedings{zaratiana2024gliner,
  title={GLiNER: Generalist model for named entity recognition using bidirectional transformer},
  author={Zaratiana, Urchade and Tomeh, Nadi and Holat, Pierre and Charnois, Thierry},
  booktitle={Proceedings of the 2024 conference of the north american chapter of the association for computational linguistics: Human language technologies (volume 1: Long papers)},
  pages={5364--5376},
  year={2024}
}

@inproceedings{zeng2025privacyrestore,
  title={Privacyrestore: Privacy-preserving inference in large language models via privacy removal and restoration},
  author={Zeng, Ziqian and Wang, Jianwei and Yang, Junyao and Lu, Zhengdong and Li, Haoran and Zhuang, Huiping and Chen, Cen},
  booktitle={Proceedings of the 63rd Annual Meeting of the Association for Computational Linguistics (Volume 1: Long Papers)},
  pages={10821--10855},
  year={2025}
}

@inproceedings{zhang2025appagent,
  title={Appagent: Multimodal agents as smartphone users},
  author={Zhang, Chi and Yang, Zhao and Liu, Jiaxuan and Li, Yanda and Han, Yucheng and Chen, Xin and Huang, Zebiao and Fu, Bin and Yu, Gang},
  booktitle={Proceedings of the 2025 CHI conference on human factors in computing systems},
  pages={1--20},
  year={2025}
}

@article{li2024appagent,
  title={Appagent v2: Advanced agent for flexible mobile interactions},
  author={Li, Yanda and Zhang, Chi and Jiang, Wenjia and Yang, Wanqi and Fu, Bin and Cheng, Pei and Chen, Xin and Chen, Ling and Wei, Yunchao},
  journal={arXiv preprint arXiv:2408.11824},
  year={2024}
}

@article{jiang2025appagentx,
  title={Appagentx: Evolving gui agents as proficient smartphone users},
  author={Jiang, Wenjia and Zhuang, Yangyang and Song, Chenxi and Yang, Xu and Zhou, Joey Tianyi and Zhang, Chi},
  journal={arXiv preprint arXiv:2503.02268},
  year={2025}
}

@article{zhang2025dualtap,
  title={DualTAP: A Dual-Task Adversarial Protector for Mobile MLLM Agents},
  author={Zhang, Fuyao and Zhang, Jiaming and Wang, Che and Sun, Xiongtao and Hao, Yurong and Guan, Guowei and Li, Wenjie and Huang, Longtao and Lim, Wei Yang Bryan},
  journal={arXiv preprint arXiv:2511.13248},
  year={2025}
}

@inproceedings{zhang2026privweb,
  title={Privweb: unobtrusive and content-aware privacy protection for web agents},
  author={Zhang, Shuning and Jiang, Yutong and Ma, Rongjun and Yang, Yuting and Xu, Mingyao and Huang, Zhixin and Yi, Xin and Li, Hewu},
  booktitle={Proceedings of the 2026 CHI Conference on Human Factors in Computing Systems},
  pages={1--29},
  year={2026}
}

@article{zhao2026anonymization,
  title={Anonymization-Enhanced Privacy Protection for Mobile GUI Agents: Available but Invisible},
  author={Zhao, Lepeng and Zou, Zhenhua and Li, Shuo and Liu, Zhuotao},
  journal={arXiv preprint arXiv:2602.10139},
  year={2026}
}

@inproceedings{wu2025mmpro,
  title={MMPro: A Decoupled Perception-Thinking-Execution Framework for Secure GUI Agent},
  author={Wu, Benlong and Qi, Yuang and Shang, Xiuwei and Zhang, Weiming and Yu, Nenghai and Chen, Kejiang},
  booktitle={Proceedings of the 33rd ACM International Conference on Multimedia},
  pages={4679--4687},
  year={2025}
}

@inproceedings{xu-etal-2025-androidlab,
  title={Androidlab: Training and systematic benchmarking of android autonomous agents},
  author={Xu, Yifan and Liu, Xiao and Sun, Xueqiao and Cheng, Siyi and Yu, Hao and Lai, Hanyu and Zhang, Shudan and Zhang, Dan and Tang, Jie and Dong, Yuxiao},
  booktitle={Proceedings of the 63rd Annual Meeting of the Association for Computational Linguistics (Volume 1: Long Papers)},
  pages={2144--2166},
  year={2025}
}

@article{shen2026caped,
  title={CAPED: Context-Aware Privacy Exposure Defense for Mobile GUI Agents},
  author={Shen, Siyu and Xu, Fenghao and Diao, Wenrui and Zhang, Kehuan},
  journal={arXiv preprint arXiv:2606.12666},
  year={2026}
}

@article{zhang2024large,
  title={Large language model-brained gui agents: A survey},
  author={Zhang, Chaoyun and He, Shilin and Qian, Jiaxu and Li, Bowen and Li, Liqun and Qin, Si and Kang, Yu and Ma, Minghua and Liu, Guyue and Lin, Qingwei and others},
  journal={arXiv preprint arXiv:2411.18279},
  year={2024}
}

@article{agyemang2026llmredactor,
  title={LLM-Redactor: An Empirical Evaluation of Eight Techniques for Privacy-Preserving LLM Requests},
  author={Agyemang, Justice Owusu and Kponyo, Jerry John and Amponsah, Elliot and Boakye, Godfred Manu Addo and Agyekum, Kwame Opuni-Boachie Obour},
  journal={arXiv preprint arXiv:2604.12064},
  year={2026}
}

@article{jathanna2026surrogateshield,
  title={SurrogateShield: Beyond Redaction for High-Utility, Privacy-Preserving LLM Interactions},
  author={Jathanna, Sherwin Vishesh},
  journal={arXiv preprint arXiv:2606.29567},
  year={2026}
}

\clearpage

\section{Algorithmic Details}
\label{sec:appendix-algorithms}

The optional local privacy repository supplies persistent device-local detection clues; it never populates or authorizes the session mapping used by REF. Appendix~\ref{sec:appendix-store-separation} specifies its schema, update semantics, and separation from session state. All headline comparisons use an empty repository; the matched pre/post diagnostic appears in Table~\ref{tab:eval-san-mechanisms-main}.

Algorithms~\ref{alg:entity-binding} and~\ref{alg:local-reference-resolution} state the two session-sensitive transitions summarized in Section~\ref{sec:method}. SAN candidate construction follows the source-linked pipeline described there: reconstruct source streams, emit typed clues with source traces, adjudicate them jointly, and retain only parser-supported candidates.

\section{Detailed Experimental Results}

This section collects the complete prompt, structured-address, reference-lifecycle, screenshot, reader-audit, and restoration results behind the compact main-text comparisons. Grouping these related results keeps each subsection's explanatory text with its tables while allowing page-wide floats to share the available appendix pages.

\subsection{Detailed Prompt Results}
\label{sec:appendix-prompt}

Table~\ref{tab:appendix-prompt-structured} expands the structured prompt results behind Table~\ref{tab:eval-text-address-main}, and Table~\ref{tab:appendix-prompt-per-type} reports the complete type-level cells summarized in RQ1. The separately distributed text benchmark also provides the matched perturbed split, which changes surface forms while preserving protected entities and source spans. Section~\ref{sec:appendix-androidlab-audit} gives the AndroidLab detector-output review procedure and uncertainty results.

\begin{algorithm}[htp]
\caption{Session-aware entity binding}
\label{alg:entity-binding}
\begin{algorithmic}[1]
    \Require SAN-accepted candidates $C_t$, prior session state $H_{1:t-1}$
    \Ensure Entity assignments $\Gamma_t$, updated state $H_{1:t}$
    \State $H \gets \operatorname{Copy}(H_{1:t-1})$; $\Gamma_t \gets \emptyset$
    \State $G_t \gets \textsc{AssembleCurrentTurnAppearances}(C_t)$
    \ForAll{$c \in \textsc{OrderForBinding}(C_t,G_t)$}
        \If{$c$ belongs to a group assigned earlier in turn $t$}
            \State $e \gets$ that group's assigned entity
        \Else
            \State $Q \gets \textsc{LookupIndexedRoots}(c,H)$
            \State $e \gets \textsc{FirstTypeCompatible}(Q,c)$
            \If{$e=\bot$}
                \State $e \gets \textsc{AllocateFreshEntity}(c)$
            \EndIf
        \EndIf
        \State $e \gets \textsc{MergeCompatibleState}(e,c)$
        \State Record $c$ under $e$; update indexes and recency in $H$
        \State $\Gamma_t[c] \gets e$
    \EndFor
    \State \Return $\Gamma_t,H$
\end{algorithmic}
\end{algorithm}

\begin{algorithm}[htp]
\caption{Local reference resolution and restoration}
\label{alg:local-reference-resolution}
\begin{algorithmic}[1]
    \Require Returned action $A'_t$, active session state $H_{1:t}$
    \Ensure Locally resolved action payload $A_t$
    \State $A_t \gets A'_t$
    \ForAll{complete bracketed candidate $z$ in $A'_t$}
        \State $\bar z \gets \textsc{NormalizeAllowlisted}(z)$
        \State $(\tau,n,\sigma,\lambda) \gets \textsc{ParseAnchored}(\bar z)$
        \Comment{$\sigma$: address projection; $\lambda$: optional \texttt{LEN=L}}
        \If{parsing fails}
            \State \textbf{continue}
        \EndIf
        \State $e \gets \textsc{LookupActiveRoot}(H_{1:t},n)$
        \If{$e=\bot$ or $\operatorname{type}(e)\neq\tau$}
            \State \textbf{continue}
        \EndIf
        \If{$\sigma=\varnothing$ and $\lambda=\varnothing$}
            \State $v \gets \textsc{AccumulatedValue}(e)$
        \ElsIf{$\sigma=\varnothing$, $\tau\in\{\mathrm{NUM},\mathrm{ALNUM}\}$, and $\lambda=\texttt{LEN=L}$ for an integer $L\geq0$}
            \State $v \gets \textsc{AccumulatedValue}(e)$
            \Comment{descriptive only; do not compare $L$ with the stored value}
        \ElsIf{$\sigma$ is supported and locally present in $e$}
            \State $v \gets \textsc{Project}(e,\sigma)$
        \Else
            \State \textbf{continue}
        \EndIf
        \State Replace this occurrence of $z$ by $v$ in $A_t$
    \EndFor
    \State \Return $A_t$
\end{algorithmic}
\end{algorithm}

\begin{table*}[!t]
\centering
\caption{Full structured-prompt results.}
\label{tab:appendix-prompt-structured}
\begin{tabularx}{\linewidth}{@{}L{1.5}C{0.4}C{1.02}C{1.02}C{1.02}C{1.02}C{1.02}@{}}
\toprule
Method & Lang & GenR@1.0$\uparrow$ & GenR@0.8$\uparrow$ & $\mathrm{LR}_{\mathrm{char}}\downarrow$ & EPred\_gt$\downarrow$ & ECov\_char$\downarrow$ \\
\midrule
Presidio + rules & EN & 0.4985 & 0.4985 & 0.5940 & 0.1250 & 0.5356 \\
PrA & EN & 0.6749 & 0.6749 & 0.2627 & 0.4782 & 0.0070 \\
PrG & EN & 0.7912 & 0.8593 & 0.1283 & 0.3089 & 0.0829 \\
AbI & EN & 0.5456 & 0.8212 & 0.2979 & 0.7252 & \best{0.0000} \\
Ours & EN & \best{0.9740} & \best{0.9794} & \best{0.0125} & \best{0.0016} & \best{0.0000} \\
\midrule
Presidio + rules & ZH & 0.4725 & 0.4725 & 0.4767 & \best{0.0000} & \best{0.0000} \\
PrA & ZH & 0.5048 & 0.5048 & 0.3998 & 0.2210 & 0.0022 \\
PrG & ZH & 0.3231 & 0.3231 & 0.6876 & \best{0.0000} & 0.0185 \\
AbI & ZH & 0.2709 & 0.2709 & 0.7256 & \best{0.0000} & \best{0.0000} \\
Ours & ZH & \best{0.9971} & \best{0.9971} & \best{0.0023} & \best{0.0000} & 0.0087 \\
\bottomrule
\end{tabularx}
\PGTableNote{\textit{Note.} The protocol matches Table~\ref{tab:eval-text-address-main}. Best values are marked within each language; all tied cells are marked.}
\end{table*}

\begin{table*}[!t]
\centering
\caption{Per-type structured-prompt source removal. Cells are GenR@1.0 / $\mathrm{LR}_{\mathrm{char}}$.}
\label{tab:appendix-prompt-per-type}
\begin{tabularx}{\linewidth}{@{}C{0.3}L{1.05}R{0.5}C{1.23}C{1.23}C{1.23}C{1.23}C{1.23}@{}}
\toprule
Lang & PII type & $N$ & AbI & PrG & Presidio + rules & PrA & Ours \\
\midrule
EN & Account ID & 126 & 0.9841/0.0202 & 0.6111/0.1759 & 0.9762/0.0151 & 0.0000/0.8652 & 0.9365/0.0469 \\
EN & Address & 625 & 0.0000/0.5208 & 0.0000/0.2245 & 0.0000/0.8237 & 0.0000/0.4218 & 0.9200/0.0136 \\
EN & Bank card & 150 & 1.0000/0.0000 & 1.0000/0.0000 & 1.0000/0.0000 & 1.0000/0.0000 & 1.0000/0.0000 \\
EN & Driver license & 328 & 1.0000/0.0000 & 1.0000/0.0000 & 1.0000/0.0000 & 0.9848/0.0000 & 1.0000/0.0000 \\
EN & EMAIL & 262 & 1.0000/0.0000 & 0.9924/0.0004 & 1.0000/0.0000 & 1.0000/0.0000 & 1.0000/0.0000 \\
EN & License plate & 135 & 0.9926/0.0000 & 1.0000/0.0000 & 0.3704/0.5825 & 0.0000/0.8653 & 1.0000/0.0000 \\
EN & Member ID & 155 & 0.4387/0.5641 & 0.4065/0.4090 & 1.0000/0.0000 & 0.0129/0.8112 & 0.8516/0.1392 \\
EN & NAME & 1141 & 0.1543/0.0868 & 0.9965/0.0012 & 0.0000/0.8435 & 0.9579/0.0000 & 0.9869/0.0079 \\
EN & PHONE & 775 & 1.0000/0.0000 & 1.0000/0.0000 & 1.0000/0.0000 & 0.8581/0.0000 & 1.0000/0.0000 \\
\midrule
ZH & Account ID & 117 & 0.0000/1.0000 & 0.0171/0.9831 & 0.7265/0.2446 & 0.0000/0.9038 & 0.9402/0.0500 \\
ZH & Address & 668 & 0.0000/1.0000 & 0.0344/0.9557 & 0.0000/1.0000 & 0.0000/0.6815 & 1.0000/0.0000 \\
ZH & Bank card & 150 & 0.0000/1.0000 & 0.0267/0.9732 & 1.0000/0.0000 & 0.0000/0.7855 & 1.0000/0.0000 \\
ZH & EMAIL & 254 & 0.9685/0.0308 & 0.9764/0.0215 & 1.0000/0.0000 & 1.0000/0.0000 & 1.0000/0.0000 \\
ZH & ID card & 329 & 0.0000/0.9605 & 0.1185/0.8815 & 1.0000/0.0000 & 1.0000/0.0000 & 1.0000/0.0000 \\
ZH & License plate & 115 & 0.0000/1.0000 & 0.0000/1.0000 & 0.0000/1.0000 & 0.0000/0.7913 & 1.0000/0.0000 \\
ZH & Member ID & 155 & 0.0000/1.0000 & 0.0000/0.9925 & 1.0000/0.0000 & 0.0000/0.9195 & 0.9742/0.0277 \\
ZH & NAME & 1183 & 0.0423/0.9577 & 0.1505/0.8495 & 0.0000/1.0000 & 0.4683/0.4294 & 1.0000/0.0000 \\
ZH & PHONE & 817 & 0.8935/0.1065 & 0.8935/0.1065 & 1.0000/0.0000 & 0.9486/0.0000 & 1.0000/0.0000 \\
\bottomrule
\end{tabularx}
\PGTableNote{\textit{Note.} Results use the empty local privacy repository. Table~\ref{tab:eval-san-mechanisms-main} reports the separate repository post-pass sensitivity diagnostic.}
\end{table*}

\subsubsection{AndroidLab Detector-Output Audit Prompt}
\label{sec:appendix-androidlab-audit}

The audit unit is one normalized detector surface with its retained crops, screen context, accessibility text, and source traces. GPT-5.6-Sol proposes one of three content classes---protected, neutral optional, or explicit non-private---plus extraction-quality flags. One author then reviewed all 557 semantically deduplicated, method-conditioned surfaces (319 PrivacyGuard and 238 AbI), checking every distinct task context and splitting context-dependent cases. Table~\ref{tab:eval-android-protection-audit} uses the author-final labels. The anonymous source contains the audit-form generator, adjudication schema, and aggregation code. Screen-level evidence and the completed adjudication ledger are not part of the frozen results archive.

\begin{table}[t]
\centering
\caption{Author-reviewed AndroidLab protection-output composition.}
\label{tab:eval-android-protection-audit}
\setlength{\tabcolsep}{2pt}
\begin{tabularx}{\linewidth}{@{}L{0.75}R{0.7}R{1.2}R{1.15}R{1.2}@{}}
\toprule
Method & Surfaces & Protected & Neutral & Explicit neg. \\
\midrule
Ours & 319 & 269 (84.33\%) & 26 (8.15\%) & 24 (7.52\%) \\
AbI          & 238 & 182 (76.47\%) & 32 (13.45\%) & 24 (10.08\%) \\
\bottomrule
\end{tabularx}
\PGTableNote{Composition of emitted, method-conditioned surfaces after conservative semantic deduplication.}
\end{table}

Semantic deduplication merges only verified spelling, OCR, truncation, repetition, or formatting variants; substring containment alone never establishes identity.

\begin{table*}[!t]
\centering
\caption{Task-paired uncertainty for AndroidLab utility.}
\label{tab:eval-android-paired}
\begin{tabularx}{\linewidth}{@{}L{1.2}L{0.8}L{1.3}R{0.7}@{}}
\toprule
Planner & Baseline & $\Delta$SR pp [95\% CI] & Ours win/loss/tie \\
\midrule
Qwen3.7-Plus & SAN-only     & $6.52$ [$-1.45$, $14.49$] & 20/11/107 \\
Qwen3.7-Plus & Original & $0.72$ [$-5.80$, $7.25$] & 12/11/115 \\
Qwen3.7-Plus & AbI         & $0.00$ [$-7.25$, $7.25$] & 12/12/114 \\
GPT-4o       & Original & $-3.62$ [$-10.14$, $2.90$] & 8/13/117 \\
GPT-4o       & AbI         & $3.62$ [$-4.35$, $11.59$] & 17/12/109 \\
Gemini-3-Flash-Preview & SAN-only     & $2.90$ [$-2.90$, $8.70$] & 11/7/120 \\
Gemini-3-Flash-Preview & Original & $2.17$ [$-3.62$, $7.97$] & 10/7/121 \\
Gemini-3-Flash-Preview & AbI         & $5.80$ [$0.72$, $10.87$] & 11/3/124 \\
Qwen3.7-Max  & Original & $-2.90$ [$-9.42$, $3.62$] & 8/12/118 \\
Qwen3.7-Max  & AbI         & $0.72$ [$-5.07$, $6.52$] & 9/8/121 \\
\bottomrule
\end{tabularx}
\PGTableNote{Each row compares the same 138 tasks. Percentile intervals use 10,000 task-paired bootstrap resamples with seed 1729 and quantify task-resampling uncertainty over the fixed run outputs.}
\end{table*}

\begin{table}[t]
\centering
\caption{AndroidLab SAN-only-to-Full task-success contrast.}
\label{tab:appendix-android-complete}
\begin{tabularx}{\linewidth}{@{}L{2}C{0.8}C{0.5}C{0.5}@{}}
\toprule
Planner & SAN-only & Full & $\Delta$ pp \\
\midrule
Gemini-3-Flash-Preview & 60.14 & 63.04 & $+2.90$ \\
Qwen3.7-Plus & 50.00 & 56.52 & $+6.52$ \\
\bottomrule
\end{tabularx}
\PGTableNote{Values are percentages on the same 138 tasks. SAN-only removes detected values without readable references. Full-system comparisons with Original and AbI remain in Table~\ref{tab:eval-android-main}.}
\end{table}

\subsection{Detailed Structured-Address Results}
\label{sec:appendix-address}

Table~\ref{tab:appendix-address-pcr} gives pair coverage and conditional consistency for the cross-system continuity rows summarized in Table~\ref{tab:eval-ref-capability-continuity}; exact-reuse wrapper rows serve as detector controls. Table~\ref{tab:eval-address-spec-main} gives the SPEC-construction outcomes, and Table~\ref{tab:eval-address-split-main} gives the paired split-block BD/Chain matrix. The retained prompt/address summary CSVs include the complete upload-side baseline rows; their headline endpoints are already reported in Table~\ref{tab:eval-text-address-main}.

\begin{table*}[!t]
\centering
\caption{Representative SPEC-aware address projections.}
\label{tab:eval-address-spec-main}
\begin{tabularx}{\linewidth}{@{}L{0.8}C{0.5}C{1.6}C{0.5}C{1.6}@{}}
\toprule
Variant & EN $N$ & EN (Exact / Under / Empty) & ZH $N$ & ZH (Exact / Under / Empty) \\
\midrule
to\_city & 360/360 & 1.000 / 0.000 / 0.000 & 360/360 & 1.000 / 0.000 / 0.000 \\
to\_district & 360/360 & 1.000 / 0.000 / 0.000 & 360/360 & 0.936 / 0.064 / 0.000 \\
to\_road & 339/360 & 1.000 / 0.000 / 0.000 & 360/360 & 0.758 / 0.242 / 0.000 \\
full & 360/360 & 0.000 / 1.000 / 0.000 & 360/360 & 0.000 / 1.000 / 0.000 \\
ocr\_noise & 360/360 & 0.000 / 1.000 / 0.000 & 360/360 & 0.000 / 1.000 / 0.000 \\
\bottomrule
\end{tabularx}
\PGTableNote{$N$ is evaluable/total. Under is a nonempty strict subset of the expected SPEC-token set; Empty is an evaluable root with an empty SPEC. Empty and overshoot counts are both zero. The 21 non-evaluable EN road cases are SAN detection misses and are excluded from REF projection rates.}
\end{table*}

\begin{table*}[!t]
\centering
\caption{Detection and reassembly of split-block addresses.}
\label{tab:eval-address-split-main}
\begin{tabularx}{\linewidth}{@{}L{1.4}C{0.9}C{0.9}C{0.9}C{0.9}@{}}
\toprule
\multirow{2}{*}{Method} & \multicolumn{2}{c}{2-block} & \multicolumn{2}{c}{3-block} \\
\cmidrule(lr){2-3}\cmidrule(lr){4-5}
 & EN & ZH & EN & ZH \\
\midrule
Presidio + rules + Exact & 0.442 / 0.000 & 0.000 / 0.000 & 0.383 / 0.000 & 0.000 / 0.000 \\
PrA+Exact & 0.979 / 0.000 & 0.681 / 0.000 & 0.881 / 0.000 & 0.582 / 0.000 \\
PrG+Exact & \best{1.000} / 0.000 & 0.475 / 0.000 & \best{1.000} / 0.000 & 0.482 / 0.000 \\
AbI & 0.971 / 0.000 & 0.367 / 0.000 & 0.708 / 0.000 & 0.250 / 0.000 \\
Ours & 0.992 / \best{0.925} & \best{0.952} / \best{0.667} & 0.944 / \best{0.817} & \best{0.915} / \best{0.538} \\
\bottomrule
\end{tabularx}
\PGTableNote{Each cell reports BD / Chain. Presidio rows use the same Exact wrapper as Table~\ref{tab:appendix-address-pcr}; best values are marked per subcolumn.}
\end{table*}

\begin{table*}[!t]
\centering
\caption{Full cross-type continuity matrix at $\tau=0.80$.}
\label{tab:appendix-address-pcr}
\begin{tabularx}{\linewidth}{@{}C{0.4}L{1.5}C{1.025}C{1.025}C{1.025}C{1.025}@{}}
\toprule
Lang & Method & $C_{\mathrm{pair}}\uparrow$ & $\mathrm{PCR}_{\mathrm{cond}}\uparrow$ & $\mathrm{PCR}_{\mathrm{strict}}\uparrow$ & Chain$\uparrow$ \\
\midrule
EN & GLiNER + AbI                 & 0.2538 & 0.5773 & 0.1465 & 0.0000 \\
EN & PrA+Exact                    & 0.0882 & 0.3921 & 0.0346 & 0.0000 \\
EN & PrG+Exact                    & \best{0.9150} & 0.0546 & 0.0499 & 0.0000 \\
EN & Ours                         & 0.8886 & \best{0.9098} & \best{0.8084} & \best{0.8708} \\
\midrule
ZH & GLiNER + AbI                 & 0.0047 & 0.0000 & 0.0000 & 0.0000 \\
ZH & PrA+Exact                    & 0.0313 & 0.2351 & 0.0074 & 0.0000 \\
ZH & PrG+Exact                    & \best{0.9746} & 0.0076 & 0.0075 & 0.0000 \\
ZH & Ours                         & 0.9448 & \best{0.8652} & \best{0.8175} & \best{0.6021} \\
\bottomrule
\end{tabularx}
\PGTableNote{\textit{Note.} Presidio detector rows use the evaluation-only Exact wrapper: identical complete detected surfaces reuse a session-scoped identifier, with no component or fuzzy equivalence. AbI uses its native value map. Table~\ref{tab:eval-ref-capability-continuity} reports every row's strict PCR and chain result; this companion additionally separates pair coverage from conditional consistency. Chain is the mean two-/three-block chain rate. The wrapper is applied to frozen recorded detections by the released companion script.}
\end{table*}

\subsection{REF Continuity and Policy Details}
\label{sec:appendix-reference}

Table~\ref{tab:eval-ref-capability-continuity} reports Full results by planner relative to the structural SAN-only endpoint together with the matched lifecycle ablation. This appendix gives the per-case counts and failure taxonomy for turn-local references, AbI-style exact-value reuse, and the Full structured lifecycle described in RQ2. The anonymous source contains the deterministic scoring exporter and aggregation code. Every counted success across both planners and all three executed mechanisms follows a submitted reference through local resolution and restoration.

\paragraph{Resolution and restoration contract.}
An address mention exposes only the display levels present in that source mention. PrivacyGuard orders them as \texttt{COUNTRY}, \texttt{PROV}, \texttt{CITY}, \texttt{DIST}, \texttt{ROAD}, and \texttt{DTL}. A road-only mention can therefore render as \texttt{[[ADDR\#2.ROAD]]}, a unit fragment as \texttt{[[ADDR\#2.DTL]]}, and a multi-level mention as a combined \texttt{SPEC}. Every projection retains the entity root. The bare root denotes the accumulated complete address and is also used when a source mention exposes no supported display level. NAME family, given-name, and title components are binding evidence only in the evaluated configuration; the external component-projection contract remains address-specific.

\paragraph{Descriptive length metadata.}
When local context classifies an otherwise generic fragment as NUM or ALNUM, the renderer can retain its implementation-normalized length in \texttt{LEN=L}. NUM length ignores common spaces and hyphens and a leading sign; ALNUM counts the displayed fragment string. The cue distinguishes shapes such as a six-digit verification code and a ten-character meeting identifier. The grammar accepts \texttt{LEN=L} only for NUM/ALNUM with a nonnegative integer $L$; it is descriptive rather than an authorization or integrity field. Resolution therefore checks the active session root and stored type but does not compare $L$ with the stored value: for example, both \texttt{[[ALNUM\#8.LEN=8]]} and a syntactically valid \texttt{[[ALNUM\#8.LEN=9]]} resolve to the same accumulated ALNUM value. Address \texttt{LEN} suffixes remain unsupported. Contextual-reconstruction audits retain the released \texttt{LEN} cue and include its contribution.

\paragraph{Evaluated reference interfaces.}
The evaluation fixes three interfaces: turn-local references, AbI-style exact-value reuse, and PrivacyGuard's Full structured lifecycle. Table~\ref{tab:eval-ref-capability-continuity} gives their planner-separated totals; the tables below provide the per-case counts.

Under Qwen, AbI-style exact-value reuse completes 8/12 root-like and 0/18 component-like runs, while Full completes 11/12 and 6/18. Under Gemini, the corresponding counts are 7/12 and 0/18 for AbI-style exact-value reuse, versus 8/12 and 12/18 for Full. Turn-local references complete 0/18 component-like runs under both planners and 1/12 Gemini root-like runs. Thus the cross-planner result isolates Full's component-projection gain beyond identical-value reuse. The anonymous source includes the generic interface notice and deterministic exporter; Tables~\ref{tab:appendix-reference-cases} and~\ref{tab:eval-reference-failures} report the planner-separated counts and outcomes.

\begin{table*}[!t]
\centering
\caption{Automatically verified reference paths by planner and case.}
\label{tab:appendix-reference-cases}
\begin{tabularx}{\textwidth}{@{}L{1.1}L{1.0}L{0.9}C{1.0}C{1.0}C{1.0}@{}}
\toprule
Planner & Reference case & Family & Turn-local & Exact reuse & Full lifecycle \\
\midrule
\multirow{5}{*}{Qwen}
& PHONE & Root-like & 0/6 & 5/6 & 6/6 \\
& FULL\_ADDRESS & Root-like & 0/6 & 3/6 & 5/6 \\
& CITY & Component-like & 0/6 & 0/6 & 3/6 \\
& DISTRICT & Component-like & 0/6 & 0/6 & 1/6 \\
& DTL & Component-like & 0/6 & 0/6 & 2/6 \\
\midrule
\multirow{5}{*}{Gemini}
& PHONE & Root-like & 0/6 & 5/6 & 4/6 \\
& FULL\_ADDRESS & Root-like & 1/6 & 2/6 & 4/6 \\
& CITY & Component-like & 0/6 & 0/6 & 5/6 \\
& DISTRICT & Component-like & 0/6 & 0/6 & 2/6 \\
& DTL & Component-like & 0/6 & 0/6 & 5/6 \\
\bottomrule
\end{tabularx}
\PGTableNote{\textit{Note.} Each planner--case row contains six executions scored by the deterministic path rule. A success requires a submitted reference with a trajectory-issued root and the correct projection, followed by correct local restoration; a correct plaintext guess without that path is not counted. Full structured mediation introduces the projections exercised by CITY, DISTRICT, and DTL.}
\end{table*}

\begin{table*}[t]
\centering
\caption{Planner-separated automatic outcomes for Full structured mediation.}
\label{tab:eval-reference-failures}
\begin{tabularx}{\linewidth}{@{}L{1.15}C{0.13}C{0.17}L{2.55}@{}}
\toprule
Outcome & Qwen & Gemini & Automatic definition \\
\midrule
Successful reference path & 17 & 20 & One issued target reference has the required projection and restores to gold. \\
Reference not submitted & 7 & 5 & The final answer does not submit a target-class reference. \\
Projection or granularity mismatch & 5 & 0 & The submitted address projection does not match the requested component or requested full-address granularity. \\
Wrong issued-root selection & 0 & 3 & The planner submits another valid root issued earlier in the trajectory. \\
Spurious parsed root & 1 & 1 & The submitted root was issued from a detector/parser candidate that does not denote the target entity. \\
Unresolved or malformed restore & 0 & 1 & Restoration fails or leaves a reference token unresolved. \\
\bottomrule
\end{tabularx}
\PGTableNote{\textit{Note.} Each planner contributes 30 runs: 12 root-like and 18 component-like. Failure categories follow their originating stage: reference non-use and wrong issued-root selection are planner outcomes; spurious parsed roots originate in detection/parsing; projection mismatch originates in component selection; and malformed references are local-resolution failures.}
\end{table*}

\paragraph{Planner and task protocol.}
The study evaluates Qwen and Gemini separately through the Mobile-Agent-v3.5 GUI-Owl runner in their general pretrained configurations. Ten WeChat templates cover two order identities and five required outputs: PHONE, FULL\_ADDRESS, CITY, DISTRICT, and DTL. The warm-up protocol writes synthetic phone and address records into separate live chats. Each measured task then navigates named contacts, reads PII-bearing messages distributed across those chats, and returns the protected value or requested component. For each planner, every template is executed three times under every lifecycle, yielding 30 executions per planner--lifecycle cell and 180 executions overall. The runner enforces a 25-step budget. Within each planner, the task definitions and runner configuration are held fixed across mechanisms; the released reference lifecycle is the experimental factor. Turn-local mediation issues independent references, AbI-style exact-value reuse preserves session roots through exact canonical reuse, and Full structured mediation additionally supplies component-compatible binding and supported address projections. These executions use the live Android WeChat interface. Free-form chat distributes a referent across message bubbles, turns, and OCR regions. Each scored case tracks an issued reference from the mediated instruction or screen to the planner's submitted action and its local resolution and restoration. Full is compared with SAN-only's structural zero endpoint for capability and with the three executed lifecycles for mechanism ablation. Screenshot source-level metrics test SAN direct leakage, targeted reader audits measure REF reconstruction, and the 138-task AndroidLab study measures end-to-end planner utility.

The runner retains the stock mobile-action interface and adds one generic PrivacyGuard notice shared by every PrivacyGuard task. The notice explains that bracketed roots and projections are valid string arguments and will be restored locally; it contains no task-specific private value, target reference, or worked task solution. References themselves enter the planner context through the sanitized task instruction and screen, plus prior action history. That history retains the planner's pre-restoration reference-bearing output rather than the locally restored payload. The path audit follows this fixed interface through the final executed answer.

\paragraph{Strict automatic scoring rule.}
The exporter reads the final pre-restore action, all roots issued earlier in the same trajectory, the requested gold component, restore counters, and the post-restore action. A run succeeds iff the final action contains exactly one target-class reference; its root was issued in the trajectory, and its syntax and projection satisfy the component contract; local restoration changes the action, reports at least one successful and zero failed restores, and leaves no unresolved reference token; and the normalized restored action contains the gold value or component. For CITY, DISTRICT, and DTL tasks, a bare root or an over-broad address therefore fails the projection predicate. Any literal gold already present in the pre-restore action also makes the run a failure. The deterministic rule supplies the reported labels, and its Qwen labels agree with the previously completed all-row manual check in 90/90 cases. The anonymous source includes the task definitions, command schemas, scoring exporter, and aggregation scripts. Raw device trajectories are not part of the frozen version~0.1.0 research-results archive.

\subsubsection{Recent-First Policy Characterization}
\label{sec:appendix-binding-risk}

Let $O_B=(H,c)$ denote the complete normalized observation consumed by the binder: accumulated entity components, candidate components, type-specific indexes, and recency order. If roots $e_1$ and $e_2$ both pass the compatibility predicate for an underspecified candidate $c$, reversing their insertion order reverses the selected root. Both lookup implementations agree for either order, directly verifying the specified recent-first behavior.

\begin{table*}[!t]
\centering
\caption{Exhaustive recent-first conformance over observable binder states.}
\label{tab:binding-policy-behavior}
\begin{tabularx}{\textwidth}{@{}L{2}C{0.6}C{0.8}C{1}C{0.8}C{0.8}@{}}
\toprule
Policy & Unique reuse & Multi-root selection & Order-dependent groups & Policy violations & Path agreement \\
\midrule
Component-projection recent-first & 119/119 & 290/290 & 16/16 & 0/443 & 443/443 \\
\bottomrule
\end{tabularx}
\PGTableNote{The enumeration covers zero to four roots, every compatible subset, and every insertion order. Path agreement compares reconstructed-record and runtime-index lookup; concrete cases also give 8/8 complementary reuses, 8/8 conflict separations, and 8/8 most-recent ambiguous attachments.}
\end{table*}

Recent-first consumes only $O_B$ and selects by recency; the intended referent is a gold evaluation label, not an input to the binder. The same binder observation and selected root can therefore be labeled correct or incorrect under different intended referents. Agreement rates must be measured on a declared target distribution rather than inferred from the policy alone.

\begin{table*}[!t]
\centering
\caption{Generator-conditioned exact-continuation success under two binding policies on the same 6,000 gold continuations.}
\label{tab:appendix-binding-policy-stress}
\begin{tabularx}{\textwidth}{@{}L{1.2}C{0.95}C{0.95}C{0.95}C{0.95}@{}}
\toprule
Policy & Unique success & Ambiguous success & Overall exact success & Exact failures \\
\midrule
Recent-first & 4,236/4,236 & 1,050/1,764 & \best{5,286/6,000 (0.881)} & 714 \\
Fresh-on-ambiguity & 4,236/4,236 & 0/1,764 & 4,236/6,000 (0.706) & 1,764 \\
\bottomrule
\end{tabularx}
\PGTableNote{Exact-continuation success requires the selected root to equal the generator-provided intended prior root. Fresh allocation and non-gold prior-root selection are both failures. These results are conditioned on the declared generator and intended-root distribution.}
\end{table*}

The exhaustive conformance evaluation gives 0/443 policy violations and 443/443 agreement between reconstructed-record and runtime-index lookup. The generator-conditioned comparison addresses a different question. Recent-first agrees with the generator-provided intended root in 5,286 trials, compared with 4,236 for fresh-on-ambiguity. It preserves 1,050 ambiguous continuations that fresh allocation discards, but it selects a non-gold compatible root in 714 trials. The result characterizes policy--intent agreement for this generator and does not establish a general false-merge guarantee.

\paragraph{Dataset-conditioned policy--intent audit.}
The unified model defines the policy--intent disagreement rate on an annotated dataset $D$ as $R_D=\Pr_D(A)\Pr_D(F\mid A)$ in Equation~\ref{eq:dataset-semantic-impact}. Its empirical estimate is the number of non-gold selections divided by scored continuation assignments. For the controlled-corpus audit, one event is one gold-aligned emitted root group in a view whose selected root was already issued. Multiple OCR boxes carrying the same root in that view form one event. Gold identity maps address CITY, DISTRICT, ROAD, BUILDING, UNIT, and related component annotations into their complete-address entity; type-compatible canonical protected values provide the corresponding scalar equality rule.

Across the 480 five-turn sessions, 10{,}288 emitted root groups align to gold and 4{,}352 select an existing root. Recent-first yields 0/4{,}352 binder policy--intent disagreements: EN 0/2{,}048 and ZH 0/2{,}304. At chain level, a scored chain is a session-scoped root with at least two gold-aligned occurrences, and an impure chain contains more than one non-equivalent gold entity. The audit yields 0/2{,}031 impure chains: EN 0/1{,}005 and ZH 0/1{,}026. These annotations provide selection and chain outcomes. Complete compatible-root-set labels additionally instantiate $N_A$, $N_F$, $q_D$, $\rho_D$, $R_D$, and chain impurity; the generated stress suite supplies those labels.

\paragraph{Randomized stress protocol.}
The independent generator uses seed 20260725 and 6{,}000 trials. It samples one to four compatible roots with a decreasing count distribution and samples the gold root's recency rank with decreasing probability. Compatible roots share a generated road component but carry distinct complete-address numbers; incompatible roots act as distractors and insertion order is permuted. The comparison executes recent-first and fresh-on-ambiguity on the same rows through the production compatibility predicate and both lookup paths. The source snapshot fixes the generator code, parameters, and seed; the frozen summary records analytical expectations and aggregate compatible-set and root-count strata. The reported rates characterize this declared stress distribution rather than a fitted population.

\paragraph{Privacy interpretation.}
A visible root/projection edge can reveal compatibility, equality, substring, type, or component information. A target distribution with compatible-set and intended-identity labels determines $q_D$, $\rho_D$, and $R_D$. Once the policy-generated raw targets and cloud-visible reference graph are fixed, changing only the gold policy--intent agreement labels leaves a fixed reader's input and reconstruction output unchanged; each released policy graph therefore receives its own leakage audit.

\begin{table}[!t]
\centering
\caption{Complete CrossTurnPII single-view reader exposure on common gold. SAN rows are placeholder-free erase-only views. Full rows are the two reference-releasing pipelines.}
\label{tab:appendix-crossturnpii-complete}
\begin{tabularx}{\linewidth}{@{}C{0.45}L{1.25}C{0.55}C{1.375}C{1.375}@{}}
\toprule
Lang & Method & View & Qwen$\downarrow$ & Gemini$\downarrow$ \\
\midrule
EN & AbI & SAN & 0.2775 & 0.2800 \\
EN & PrA & SAN & 0.4758 & 0.4826 \\
EN & PrG & SAN & 0.5754 & 0.5815 \\
EN & Ours & SAN & \best{0.1289} & \best{0.1332} \\
\cmidrule(lr){2-5}
EN & AbI & Full & 0.3559 & 0.3444 \\
EN & Ours & Full & \best{0.1310} & \best{0.1353} \\
\midrule
ZH & AbI & SAN & 0.7513 & 0.7598 \\
ZH & PrA & SAN & 0.6488 & 0.6590 \\
ZH & PrG & SAN & 0.6094 & 0.6221 \\
ZH & Ours & SAN & \best{0.0437} & \best{0.0532} \\
\cmidrule(lr){2-5}
ZH & AbI & Full & 0.7559 & 0.7644 \\
ZH & Ours & Full & \best{0.0796} & \best{0.0905} \\
\bottomrule
\end{tabularx}
\PGTableNote{Bold marks the lowest exposure within the corresponding SAN or Full block. Presidio configurations are natively SAN-only and have no Full counterpart.}
\end{table}

\subsection{Screenshot and Reader-Audit Details}
\label{sec:appendix-screenshot}

This section retains the complete released-view comparisons and detailed screenshot diagnostics summarized in the main text. Tables~\ref{tab:appendix-crossturnpii-complete} and~\ref{tab:appendix-privscreen-complete} retain every SAN and Full row, while Table~\ref{tab:appendix-full-union-decomposition} separates per-view exposure from the final five-turn audited union. The remaining tables report type-stratified exposure, conditional re-exposure, method-conditioned reconstruction, OCR-backend sensitivity, and component cost.

\subsubsection{Complete Released-View Comparisons}

\begin{table}[!t]
\centering
\caption{Complete PrivScreen privacy exposure and visual-answer utility under the Original, SAN, and Full configurations.}
\label{tab:appendix-privscreen-complete}
\begin{tabularx}{\linewidth}{@{}L{1.1}C{0.8}C{1.025}C{1.025}C{1.025}C{1.025}@{}}
\toprule
& & \multicolumn{2}{c}{Qwen} & \multicolumn{2}{c}{Gemini} \\
\cmidrule(lr){3-4}\cmidrule(lr){5-6}
Method & View & Exposure$\downarrow$ & Utility$\uparrow$ & Exposure$\downarrow$ & Utility$\uparrow$ \\
\midrule
Original & Original & 0.9802 & 0.912 & 0.9811 & 0.968 \\
\midrule
AbI & SAN & 0.1992 & 0.914 & 0.2096 & 0.990 \\
PrA & SAN & 0.3683 & 0.896 & 0.3692 & 0.876 \\
PrG & SAN & 0.3579 & 0.844 & 0.3560 & 0.934 \\
Ours & SAN & \best{0.0585} & \best{0.916} & \best{0.0576} & \best{0.994} \\
\midrule
AbI & Full & 0.2134 & \best{0.914} & 0.2229 & 0.950 \\
Ours & Full & \best{0.0793} & 0.910 & \best{0.0831} & \best{0.966} \\
\bottomrule
\end{tabularx}
\PGTableNote{Bold marks the better endpoint within the corresponding SAN or Full comparison. Exposure and utility remain separate outcomes. Presidio configurations have no Full reference lifecycle.}
\end{table}

\begin{table*}[!t]
\centering
\caption{Per-view and five-turn decomposition of CrossTurnPII audited source-value leakage on common-gold sessions.}
\label{tab:appendix-full-union-decomposition}
\begin{tabularx}{\textwidth}{@{}C{0.4}L{0.8}R{0.6}C{1.3}C{1.3}C{1.0}C{1.3}C{1.3}C{1.0}@{}}
\toprule
& & & \multicolumn{3}{c}{AbI Full} & \multicolumn{3}{c}{Ours Full} \\
\cmidrule(lr){4-6}\cmidrule(lr){7-9}
Lang & Reader & Gold & Per-view & Five-turn & Increment & Per-view & Five-turn & Increment \\
\midrule
EN & Qwen & 1,394 & 0.3486 & 0.3486 & 0.0000 & 0.2066 & 0.2080 & 0.0014 \\
EN & Gemini & 1,394 & 0.3271 & 0.3271 & 0.0000 & 0.2145 & 0.2174 & 0.0029 \\
ZH & Qwen & 1,424 & 0.7022 & 0.7022 & 0.0000 & 0.1208 & 0.1222 & 0.0014 \\
ZH & Gemini & 1,424 & 0.7163 & 0.7163 & 0.0000 & 0.1405 & 0.1426 & 0.0021 \\
\bottomrule
\end{tabularx}
\PGTableNote{Per-view exposure is unioned within each session. The five-turn union additionally includes exact or partial reconstruction from the ordered Full upload views under the same matching rule. Increment is the five-turn union minus the per-view union.}
\end{table*}

\begin{table*}[!t]
\centering
\caption{PrivScreen field exposure by PII type. Each cell reports Qwen / Gemini; lower is better. SAN rows are placeholder-free erase-only views; Full rows are the two complete reference-releasing configurations.}
\label{tab:appendix-privscreen-per-type}
\begin{tabularx}{\textwidth}{@{}L{1.2}C{0.97}C{0.97}C{0.97}C{0.97}C{0.97}C{0.97}C{0.98}@{}}
\toprule
Method & Overall & Person & Email & Phone & Location & Account & IP address \\
\midrule
Original & .9802/.9811 & .9752/.9752 & .9867/.9867 & .9900/.9900 & .9700/.9800 & .9742/.9742 & 1/1 \\
AbI SAN & .1992/.2096 & .4134/.4257 & .0533/.0533 & 0/0 & .3400/.4000 & .0129/.0129 & 0/0 \\
PrA SAN & .3683/.3692 & .7574/.7500 & 0/0 & .1250/.1250 & .2300/.2600 & .2194/.2194 & .0400/.0600 \\
PrG SAN & .3579/.3560 & .1955/.1931 & .9200/.9133 & .0200/.0200 & .4600/.4600 & .7226/.7226 & 0/0 \\
Ours SAN & .0585/.0576 & .1139/.1114 & 0/0 & 0/0 & .1100/.1100 & 0/0 & .1000/.1000 \\
\midrule
AbI Full & .2134/.2229 & .4406/.4505 & .0533/.0800 & 0/0 & .3800/.4000 & .0129/.0129 & 0/0 \\
Ours Full & .0793/.0831 & .1634/.1733 & 0/0 & 0/0 & .1300/.1300 & 0/0 & .1000/.1000 \\
\bottomrule
\end{tabularx}
\PGTableNote{\textit{Note.} Person, email, and IP retain their benchmark fields. Phone groups mobile-number and phone-number field aliases; Location groups country/region and location; Account groups username, Xiaohongshu-ID, and Weixin-ID fields. Grouping changes only presentation, not the field scorer or exposure threshold.}
\end{table*}

\begin{table*}[!t]
\centering
\caption{Method-specific CrossTurnPII conditional re-exposure and signed net contrasts.}
\label{tab:eval-crossturnpii-reference-addition}
\begin{tabularx}{\textwidth}{@{}L{0.4}L{0.4}C{0.4}L{1.0}L{1.9}L{1.9}@{}}
\toprule
Reader & Method & Lang & New/protected & RER [95\% CI] & Net $\Delta$ [95\% CI] \\
\midrule
Qwen & AbI & EN & 240/2,018 & 0.1189 [0.0966,0.1442] & $+0.0784$ [$+0.0621,+0.0978$] \\
     &     & ZH & 20/706 & 0.0283 [0.0161,0.0411] & $+0.0046$ [$+0.0014,+0.0078$] \\
     & Ours & EN & 26/2,433 & 0.0107 [0.0050,0.0175] & $+0.0022$ [$-0.0053,+0.0104$] \\
     &      & ZH & 107/2,715 & 0.0394 [0.0274,0.0527] & $+0.0359$ [$+0.0240,+0.0490$] \\
\midrule
Gemini & AbI & EN & 199/2,011 & 0.0990 [0.0807,0.1177] & $+0.0645$ [$+0.0522,+0.0772$] \\
       &     & ZH & 16/682 & 0.0235 [0.0130,0.0334] & $+0.0046$ [$+0.0014,+0.0075$] \\
       & Ours & EN & 22/2,421 & 0.0091 [0.0044,0.0146] & $+0.0022$ [$-0.0028,+0.0073$] \\
       &      & ZH & 117/2,688 & 0.0435 [0.0300,0.0578] & $+0.0373$ [$+0.0243,+0.0507$] \\
\bottomrule
\end{tabularx}
\PGTableNote{\textit{Note.} RER is newly exposed Full entities divided by entities not recovered from the named method's placeholder-free SAN view. Net $\Delta$ is the signed Full-minus-SAN exposure difference and is retained as a secondary diagnostic. Intervals use 10,000 paired bootstrap resamples of all images belonging to the same persona (seed 1729).}
\end{table*}

\begin{table}[t]
\centering
\caption{Exact PrivacyGuard reference reconstruction by reader and language.}
\label{tab:appendix-targeted-reconstruction}
\begin{tabularx}{\linewidth}{@{}L{1.0}C{1.0}C{1.0}C{1.0}C{1.0}@{}}
\toprule
Reader & Single EN & Single ZH & Five-turn EN & Five-turn ZH \\
\midrule
Qwen & 23/2,510 & 43/2,550 & 42/1,454 & 43/1,393 \\
Gemini  & 25/2,510 & 55/2,521 & 39/1,454 & 38/1,365 \\
\bottomrule
\end{tabularx}
\PGTableNote{\textit{Note.} Cells are exact recoveries / emitted-reference targets. Small denominator differences preserve every target from each complete frozen reader run.}
\end{table}

The native $\mathrm{LR}_{\mathrm{detect}}$ columns provide the pre-render intermediate measurement for each default system, while the controlled-backend matrices serve as the same-OCR algorithm comparison and OCR-frontend ablation. Under native frontends, PrivacyGuard obtains 0.0724 versus 0.2018--0.4947 on PrivScreen and 0.0316/0.0389 versus 0.1733--0.4454/0.4509--0.6632 on controlled EN/ZH screenshots. Under shared PP-OCRv5-Mobile, the corresponding non-PrivacyGuard ranges are 0.1582--0.4750 and 0.0482--0.3397/0.2576--0.6001. PrivacyGuard retains GenR@1.0 between 0.81 and 0.86 on PrivScreen across the four backends. Tesseract creates heavier word-level fragmentation, where conservative inline bridging retains precision but lowers recall relative to the PP-OCR backends.

\begin{table*}[!t]
\centering
\caption{Controlled OCR-backend results on PrivScreen.}
\label{tab:appendix-screenshot-ocr-matrix}
\begin{tabularx}{\textwidth}{@{}L{0.8}L{1.2}C{1.0}C{1.0}C{1.0}C{1.0}C{1.0}@{}}
\toprule
Method & OCR & Det-P (core)$\uparrow$ & GenR@1.0$\uparrow$ & GenR@0.8$\uparrow$ & $\mathrm{LR}_{\mathrm{detect}}\downarrow$ & Mod$\downarrow$ \\
\midrule
\multirow{4}{*}{AbI}
 & EasyOCR & \best{0.7165} & 0.6410 & 0.6476 & 0.2018 & \best{2.1025} \\
 & PP-v5-Mobile & 0.6611 & 0.6743 & 0.6762 & \best{0.1582} & 2.5224 \\
 & PP-v5-Server & 0.7083 & \best{0.7029} & \best{0.7076} & 0.1636 & 2.5433 \\
 & Tesseract & 0.4141 & 0.3286 & 0.3581 & 0.3669 & 2.8747 \\
\cmidrule(lr){1-7}
\multirow{4}{*}{PrA}
 & EasyOCR & 0.3769 & 0.3848 & 0.3876 & 0.1882 & 6.3520 \\
 & PP-v5-Mobile & \best{0.4820} & \best{0.4200} & \best{0.4200} & \best{0.1779} & \best{4.6360} \\
 & PP-v5-Server & 0.3429 & 0.4114 & 0.4114 & 0.2036 & 6.0480 \\
 & Tesseract & 0.3654 & 0.3752 & 0.4048 & 0.2131 & 6.9700 \\
\cmidrule(lr){1-7}
\multirow{4}{*}{PrG}
 & EasyOCR & 0.4200 & 0.6476 & 0.6476 & 0.4968 & 6.0620 \\
 & PP-v5-Mobile & \best{0.4842} & \best{0.6705} & \best{0.6705} & 0.4750 & 6.2860 \\
 & PP-v5-Server & 0.4645 & 0.6676 & 0.6676 & \best{0.4647} & \best{5.7120} \\
 & Tesseract & 0.2182 & 0.4743 & 0.5333 & 0.4947 & 8.5480 \\
\cmidrule(lr){1-7}
\multirow{4}{*}{Ours}
 & EasyOCR & 0.8612 & 0.8257 & 0.8333 & 0.1514 & 2.3784 \\
 & PP-v5-Mobile & \best{0.9122} & \best{0.8600} & \best{0.8705} & \best{0.0724} & 2.3116 \\
 & PP-v5-Server & 0.9050 & 0.8133 & 0.8257 & 0.0978 & 2.2327 \\
 & Tesseract & 0.9084 & 0.8162 & 0.8219 & 0.1497 & \best{2.0534} \\
\bottomrule
\end{tabularx}
\PGTableNote{\textit{Note.} $\mathrm{LR}_{\mathrm{detect}}$ is the deterministic residual sensitive-character fraction after OCR and detection. Best values are marked within each method. The main text uses the shared PP-OCRv5-Mobile rows.}
\end{table*}

The frozen results archive retains the aggregate per-backend OCR and steady-state timing rows, the common-host component-cost summary, and the normal-start handset memory record. Table~\ref{tab:eval-runtime-cost} reports the paper's deployment-facing Android latency and memory measurements.

\subsection{Local Reference Resolution and Restoration Details}
\label{sec:appendix-restore-cases}

Table~\ref{tab:eval-restore-main} summarizes the 40 valid resolution-and-restoration cases that complete the local REF return path. Table~\ref{tab:appendix-restore-targeted} summarizes the implementation's rejection behavior on 59 targeted unsupported cases and 1,280 generated malformed or confusable strings. The anonymous source contains the deterministic case generator and exporter, and the frozen results archive retains the versioned summary JSON.

\begin{table*}[t]
\centering
\caption{Valid reference-resolution cases by coverage family.}
\label{tab:eval-restore-main}
\begin{tabularx}{\linewidth}{@{}L{1.3}C{0.2}L{1.5}@{}}
\toprule
Coverage family & Cases & Representative forms \\
\midrule
Canonical roots, projections, and length metadata
& 17 & \texttt{ADDR\#1}, \texttt{ADDR\#1.DIST}, \texttt{NAME\#4}, \texttt{ALNUM\#8.LEN=8} \\
Session-history reuse
& 9 & Turn-2 address, phone, number, and name roots/projections \\
Allowlisted normalized variants
& 14 & Whitespace, full-width, and separator-confusable forms \\
\bottomrule
\end{tabularx}
\PGTableNote{Every case resolves and restores the expected local value; brackets are omitted in the examples for compactness.}
\end{table*}

\begin{table*}[t]
\centering
\caption{Unsupported, malformed, and confusable reference cases by family.}
\label{tab:appendix-restore-targeted}
\begin{tabularx}{\linewidth}{@{}L{0.85}C{0.15}L{2}@{}}
\toprule
Invalid family & Cases & Representative references \\
\midrule
Unknown entity id & 3 & \texttt{[[ADDR\#999.CITY]]}, \texttt{[[PHONE\#999]]}, \texttt{[[NAME\#777]]}. \\
Foreign-session root absent locally & 3 & Address, email, and license-plate roots recorded only in another session and absent from the active-session mapping. \\
Future-turn unavailable & 3 & Address, ID-number, and alphanumeric roots that are not yet present in the current session state. \\
Type mismatch & 4 & \texttt{[[PHONE\#1]]}, \texttt{[[EMAIL\#4]]}, \texttt{[[NAME\#5]]}. \\
Unsupported scalar projection & 7 & \texttt{[[NAME\#4.CITY]]}, \texttt{[[ORG\#10.CITY]]}, \texttt{[[TIME\#11.ROAD]]}. \\
Unsupported address projection & 3 & \texttt{[[ADDR\#1.DOMAIN]]}, \texttt{[[ADDR\#1.LEN=4]]}, \texttt{[[ADDR\#1.TIME]]}. \\
Missing component projection & 3 & \texttt{[[ADDR\#1.PROV]]}, \texttt{[[ADDR\#1.COUNTRY]]}, \texttt{[[ADDR\#1.COUNTRY-PROV]]}. \\
Unknown label alias & 4 & \texttt{[[FOO\#1]]}, \texttt{[[ACCOUNT\#8.LEN=8]]}, and a full-width \texttt{CONTACT\#2} form. \\
Nested placeholder construction & 7 & \texttt{[[ADDR\#1.[CITY]]]}, \texttt{[[ADDR\#[1].CITY]]}, \texttt{[[ADDR\#1.CITY[[ROAD]]]]}. \\
Truncation or OCR mutation & 15 & \texttt{[[ADDR\#1}, \texttt{[[ADDR\#l.CITY]]}, \texttt{[[EMA1L\#5]]}, \texttt{[[LICENSE\_PLATE\#9]}. \\
Malformed or noncanonical form & 7 & \texttt{[[ADDR\#1..CITY]]}, \texttt{[[ADDR\#\#1.CITY]]}, \texttt{[[ADDR\#1.]]}, \texttt{[[addr\#1.CITY]]}. \\
\midrule
Generated malformed syntax & 1,024 & Invalid bracket, label, root, separator, and projection combinations. \\
Generated Unicode/confusable variants & 256 & Full-width and non-allowlisted label, digit, and separator variants. \\
\bottomrule
\end{tabularx}
\PGTableNote{\textit{Note.} All 59 targeted unsupported references and all 1,280 generated malformed or confusable strings remain unresolved in the evaluated implementation.}
\end{table*}

\section{Evaluation Protocol, Analysis, and Provenance}

This section consolidates deployment assumptions, metric definitions, analytical contracts, artifact provenance, and statistical procedures. The organization separates experimental results from the protocol and audit material used to interpret them.

\subsection{Deployment and Evaluation Topology}
\label{sec:appendix-topology}

PrivacyGuard's intended deployment is device-resident: mediation, raw values, and session mappings remain on the client. Comparative experiments place every protection pipeline on the same RTX 4090 server so all baselines can run under common hardware; this server is trusted experimental apparatus, not the remote planner or a deployment requirement. The handset profile instead runs PP-OCRv5-Mobile through ncnn and executes detection, binding, session state, rendering, resolution, and restoration inside the packaged Android host. A trusted controller invokes its loopback endpoint only to drive the workload and record measurements. The default in-memory session store retains records across turns, refreshes an idle timeout of 3,600 seconds in these experiments, and supports explicit deletion; persistence is opt-in for evaluation and audit.

\begin{table*}[t]
\centering
\caption{Execution profile associated with each evidence family.}
\label{tab:appendix-topology-map}
\begin{tabularx}{\linewidth}{@{}L{0.8}L{0.83}L{1.37}@{}}
\toprule
Evidence family & PrivacyGuard profile & Evidence supplied \\
\midrule
Text, address, screenshot metrics & Common RTX 4090 server & Algorithm and end-to-end output under shared inputs. \\
Controlled REF tests & Common RTX 4090 server & Binding and local-restoration mechanism behavior. \\
Audited planner-use paths & Server-hosted service + Android UI & Cloud sees mediated views; trusted base includes interface, runner, and server-hosted service. \\
Android runtime profile & Native OCR + embedded Python mediation in packaged Android host & Handset endpoint feasibility and component cost. \\
Live Android utility & Server-hosted protection + Android UI/harness & End-to-end planner utility under common protection-pipeline placement. \\
\bottomrule
\end{tabularx}
\end{table*}

\subsection{Local Privacy Repository and Session-State Separation}
\label{sec:appendix-store-separation}

The local privacy repository is a first-class, opt-in input to SAN detection rather than a restoration database. Its default location is \path{.privacyguard/privacy_repository.json}. A validated document contains persona records keyed by \path{persona_id}; each record holds typed slots for names, phones, payment and identity identifiers, email, structured addresses, organizations, and general alphanumeric values. Scalar slots may include match aliases, names retain structured full and component forms, and addresses retain typed component graphs. A write validates the complete patch, merges records by \path{persona_id}, atomically replaces the JSON file, and reloads the active rule-based detector. The repository file and its raw values are not part of the released cloud payload.

Table~\ref{tab:appendix-local-state-roles} distinguishes the three local state sources that otherwise risk being conflated. Static detection resources and the optional repository can affect which candidates SAN accepts, but neither can directly resolve a returned reference. Only records issued into the active session mapping store carry that authority.

\begin{table*}[htp]
\centering
\caption{Roles and authority of device-local state.}
\label{tab:appendix-local-state-roles}
\begin{tabularx}{\linewidth}{@{}L{0.8}L{0.7}L{1.5}@{}}
\toprule
State source & Lifetime & Permitted role \\
\midrule
Static detection resources & Packaged, read-only & Public priors, formats, contexts, and negative boundaries used only as candidate evidence. \\
Local privacy repository & Persistent, opt-in & User-linked typed values, aliases, and address/name components used only as detection evidence. \\
Session mapping store & Session/goal scoped; in-memory by default & Issued roots, typed records, projections, and raw values; the sole authority for local reference resolution and restoration. \\
\bottomrule
\end{tabularx}
\end{table*}

The evaluation isolates this mechanism. Every headline comparison starts with an empty repository. The matched diagnostic first records strict misses, enrolls only those values into the local repository, and reruns the same inputs under the same language and detection regime. Thus the repository row in Table~\ref{tab:eval-san-mechanisms-main} measures the incremental benefit of user-linked local evidence; it is not an additional cross-system baseline.

Let $P$ be the optional local privacy repository, $S_t$ the issued active-session records through turn $t$, and $a$ the cloud-returned text.
\begin{proposition}[Repository/session-state separation]
For fixed $S_t$ and $a$ and any repositories $P,P'$,
\begin{equation}
\Restore(a;S_t,P)=\Restore(a;S_t,P').
\label{eq:repository-session-separation}
\end{equation}
\end{proposition}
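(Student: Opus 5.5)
The argument is a data-flow (non-interference) argument over Algorithm~\ref{alg:local-reference-resolution}. I will treat $\Restore(a;S_t,P)$ as the output $A_t$ of that algorithm run on returned text $a$. The goal is to show that $P$ never appears as an input to any step that determines $A_t$, so the output is a function of $(a,S_t)$ alone. From that, equation~\eqref{eq:repository-session-separation} follows immediately for any $P,P'$.

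First, fix the domain precisely. $S_t$ is the issued active-session state $H_{1:t}$ \emph{after} binding: the roots, their types, and their accumulated typed entity records. The statement quantifies over fixed $S_t$, so any influence $P$ had on SAN acceptance, and hence on which records were issued, has already been absorbed into $S_t$. That influence is outside the claim, which matches the paper's scoping to ``fixed issued session state.'' Second, I would walk through the algorithm line by line and list the inputs of each primitive:
\begin{itemize}
\item candidate extraction of complete bracketed strings $z$ reads only $a$;
\item \textsc{NormalizeAllowlisted} and \textsc{ParseAnchored} read only $z$ and fixed, packaged grammar/allowlist constants;
\item \textsc{LookupActiveRoot} and the type check read $H_{1:t}$ and the parsed $(\tau,n)$;
\item the branch conditions (bare root, NUM/ALNUM \texttt{LEN=L}, supported and locally present projection) read $(\sigma,\lambda,\tau)$ and the record $e\in S_t$;
\item \textsc{AccumulatedValue} and \textsc{Project} read only $e$;
\item the replacement step reads $A_t$ and $v$.
\end{itemize}
By induction over the loop iterations, taken in the fixed order of candidates in $a$, the intermediate payload after each iteration is identical under $P$ and $P'$. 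The base case is $A_t\gets a$. In the inductive step, every quantity computed in the iteration is a function of $(z,S_t)$, and each $z$ is determined by $a$. The literal-preservation cases (parse failure, inactive root, type mismatch, unsupported or absent projection) are handled by the same induction, since each \textbf{continue} is also decided from $(z,S_t)$ alone.

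The main obstacle is justifying that no primitive \emph{implicitly} consults $P$. Two places need care. One is that ``locally present'' in a projection check must be evaluated against the record $e$ and not against repository address graphs. The other is that normalization must use only the static allowlist, not repository aliases. The proof needs an explicit implementation-level invariant, stated from Table~\ref{tab:appendix-local-state-roles} and Section~\ref{subsec:local-state}: the repository is read only by the SAN detector, and the resolver's call graph contains no reference to the repository object. I would state this as a lemma that the return hook is constructed with the session mapping store as its sole state argument, so $P$ is not in the resolver's scope. I would also check that the separate detector reload triggered by a repository write does not mutate $S_t$ (writes touch only the detector), so the session state stays fixed as the statement assumes. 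With that lemma, the induction closes and the proposition follows.
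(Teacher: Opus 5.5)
Your proposal is correct and follows the paper's argument. The paper's proof is the one-line observation that resolution and restoration draw replacement records exclusively from $S_t$; your walk through Algorithm~\ref{alg:local-reference-resolution}, with the lemma that the return hook's only state argument is the session mapping store, is a more careful version of that observation, and it usefully makes explicit that the branch decisions (parsing, root lookup, type check, projection presence), not only the substituted values, depend only on $(a,S_t)$.
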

\begin{proof}
Resolution and restoration draw replacement records exclusively from $S_t$.
\end{proof}
The repository may affect earlier detection and therefore which records enter $S_t$; it cannot authorize an unissued value on the return path.

\subsection{Evaluation Metric Definitions}
\label{sec:appendix-metrics}

\paragraph{SAN leakage metrics.}
For gold occurrences $G$, let $|g|$ be the number of scored characters in $g$ and let $c(g)$ be the union of characters covered by safe type-compatible predictions. We define
\begin{equation}
\begin{aligned}
\operatorname{GenR}@\tau
&=\frac{|\{g\in G:c(g)/|g|\geq\tau\}|}{|G|},\\
\mathrm{LR}_{\mathrm{char}}
&=1-\frac{\sum_g c(g)}{\sum_g |g|}.
\end{aligned}
\end{equation}
Higher GenR indicates stronger occurrence coverage, and lower $\mathrm{LR}_{\mathrm{char}}$ indicates less direct residual leakage. The companion tables also report $\mathrm{EPred}_{gt}$, extra gold-hitting predictions per touched occurrence, and $\mathrm{ECov}_{char}$, unique over-coverage outside the useful gold-character union normalized by total gold characters. Det-P(core) is precision against core-tier PII labels and BBoxR@0.5 is region recall at IoU or gold-area-overlap threshold 0.5. Mod is modified regions per changed image, \path{modified_region_count}/\path{images_with_any_modification}; unchanged images are excluded. We define $\mathrm{LR}_{\mathrm{detect}}=1-R^{\mathrm{detect}}_{\mathrm{char}}$, where $R^{\mathrm{detect}}_{\mathrm{char}}$ is exact sensitive-character recall from the OCR-conditioned detector output. The native form retains each system's default OCR; the shared-OCR form fixes the frontend for component control. Both are pre-render diagnostics.

CrossTurnPII Full-view exposure scores exact or partial source-value extraction from the final image on a common gold-occurrence denominator. Because a Full protected view includes its reference overlay, this is an upload/reference-interface observation rather than a SAN direct-leakage measure. The same generic extraction probe on each method's placeholder-free, opaque SAN-only image supplies its rendered SAN endpoint. For AbI and PrivacyGuard separately, Equation~\ref{eq:method-matched-ref-increment} counts $n_{01}$, entities not recovered from SAN but recovered from Full, over $n_{00}+n_{01}$, all entities protected by that SAN view. The cross-method version first restricts to entities protected by both SAN views and compares Full recovery on that common set. The signed $\mathrm{Full}_m-\mathrm{SAN}_m=(n_{01}-n_{10})/N$ difference is retained as a secondary net diagnostic; Presidio has no REF contrast because it releases no REF surface. Let $n(\cdot)$ apply NFKC normalization, case folding, and removal of whitespace and the declared punctuation set, and let $\ell(a,b)$ be the longest common contiguous-substring length. For normalized gold $g$ and reader guess $h$, exact recovery holds when $g=h$ or when $|g|\geq2$ and $g$ is a substring of $h$. Excluding exact recovery, partial recovery holds when $|h|\geq2$ and $h$ is a substring of $g$, or when
\begin{equation}
\ell(g,h)\geq \max\{2,\min(\lfloor |g|/2\rfloor,6)\}.
\label{eq:partial-recovery-rule}
\end{equation}
Single-view reference reconstruction asks for the exact value behind a reference emitted by Full using one final image, while five-turn reference reconstruction may jointly use all ordered Full turns. Their exact-recovery rates over emitted-reference targets are mechanism-specific REF reconstruction-cost endpoints. The common-gold audit additionally scores the five-turn response by the exact/partial rule in Equation~\ref{eq:partial-recovery-rule}. PrivScreen retains its benchmark-native VQA metric: a gold field is exposed when its field-match score exceeds 0.6, and visual-answer accuracy is the fraction of 500 normal questions answered correctly. Its SAN rows provide the all-method RQ1 comparison, while its AbI and PrivacyGuard Full rows provide the RQ3 external final-view comparison on the same questions and gold fields. PrivScreen is not used to estimate the RQ2 REF increment.

\paragraph{Common-gold reader audit.}
Let $\mathcal S$ be the common evaluated sessions and $G_s$ the distinct normalized construction-derived private values in session $s$. For reader $A$, let $e_A(g,s)$ indicate exact or partial full-view exposure of value $g$ in any released turn, and let $r_A(g,s)$ indicate exact or partial recovery by the targeted session-level reference-reconstruction probe. We report
\begin{equation}
\begin{aligned}
\mathrm{ALR}(A)
&=\frac{\sum_{s\in\mathcal S}\sum_{g\in G_s}
\mathbf 1\{e_A(g,s)\lor r_A(g,s)\}}
{\sum_{s\in\mathcal S}|G_s|},\\
\mathrm{RefR}_{\mathrm{session}}(A)
&=\frac{\sum_{s,g}\mathbf 1\{r_A(g,s)\}}
{\sum_{s\in\mathcal S}|G_s|},\\
\Delta_{\mathrm{ref}}^{\mathrm{union}}(A)
&=\frac{\sum_{s,g}\mathbf 1\{\neg e_A(g,s)\land r_A(g,s)\}}
{\sum_{s\in\mathcal S}|G_s|}.
\end{aligned}
\label{eq:audited-leakage-rate}
\end{equation}
$\mathrm{RefR}_{\mathrm{session}}$ assigns every exact or partial targeted five-turn recovery from a Full pipeline to REF, including values also recovered by the per-view probe. $\mathrm{ALR}(A)$ is the Full-pipeline audited source-value leakage defined in Equation~\ref{eq:audited-leakage-rate}; $\Delta_{\mathrm{ref}}^{\mathrm{union}}$ is the targeted five-turn probe's net addition after deduplicating values already counted in per-view exposure. The method-specific and common-protected conditional re-exposure rates provide AbI's and PrivacyGuard's per-view REF costs. The targeted probes and upload/reference-interface ALR compare Full outputs from AbI and PrivacyGuard on the same sessions and normalized gold values under the two fixed readers.

\paragraph{REF and utility metrics.}

For labeled same-entity pairs at threshold $\tau$, $C_{\mathrm{pair}}$ is the fraction for which both mentions bind, $\mathrm{PCR}_{\mathrm{cond}}$ is consistency conditioned on both binding, and $\mathrm{PCR}_{\mathrm{strict}}=C_{\mathrm{pair}}\mathrm{PCR}_{\mathrm{cond}}$ treats an unbound mention as inconsistent. Unless explicitly marked conditional, PCR denotes this recall-aware strict form. Split-chain rate measures whether address fragments share the expected root across two- or three-block chains. REF also reports valid resolution-and-restoration accuracy for the evaluated plugin, with reference-grammar robustness reported separately. Android SR measures completed tasks. Audited planner-use paths are finite-suite counts.

The OCR-free/native source-coverage and residual metrics provide the main SAN evidence, with the placeholder-free SAN-only conditions supplying the all-method rendered-view comparison. CrossTurnPII conditional re-exposure and targeted single-view/five-turn probes over Full references provide REF reconstruction-cost measurements. PrivScreen Full outcomes remain an external RQ3 final-view comparison rather than an RQ2 increment. Per-view exposure and the common-gold five-turn upload-view union report the audited Full upload/reference interface without double-counting a value recovered by both reader probes.

\subsection{Binding and Leakage Analysis}
\label{sec:appendix-analytical}

This section states the binding and state-separation contracts used by the main text. The empirical studies remain the primary evidence for source removal, entity continuity, reconstruction, and task outcomes.

\subsubsection{Released Observation and Reference Cost}

Let $W$ be the hidden raw values in one session, $X$ the other released context, $D_b$ the raw source characters that remain under binding policy $b$, and $R_b$ the released reference graph. The upload-view observation is
\begin{equation}
O_b=(X,D_b,R_b).
\label{eq:unified-cloud-observation}
\end{equation}
Full adds type, root-equality, projection, layout, and optional-length cues to the corresponding SAN-only observation. A planner or decoder can ignore these cues, so their optimal utility and reconstruction envelopes cannot be lower than under SAN-only. The AndroidLab ablation measures realized planner utility, while the conditional re-exposure and targeted-reconstruction audits measure realized fixed-reader cost. These empirical endpoints, rather than the monotonic observation relation itself, determine the reported tradeoff.

\subsubsection{Binding Constraint and Dataset-Conditioned Risk}

For candidate $c$ and prior root $e$, let $\tau(\cdot)$ be entity type, $O(\cdot)$ the normalized parser record, $\operatorname{Conflict}$ the type-specific observed-conflict predicate, and $\operatorname{Share}$ the required positive-overlap predicate.

\begin{proposition}[Same-type observational non-separability]
\label{prop:binding-separation}
PrivacyGuard can reuse $e$ for $c$ only if
\begin{equation}
\tau(c)=\tau(e),\qquad
\neg\operatorname{Conflict}(O(c),O(e)),\qquad
\operatorname{Share}(O(c),O(e)).
\label{eq:compatible-reuse}
\end{equation}
Thus a gold-inconsistent reuse can occur only between same-type entities that remain non-separable under the binder's current observation. Scalar reuse additionally requires canonical equality.
\end{proposition}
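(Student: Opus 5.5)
The argument unfolds the binding rule in Equation~\ref{eq:deterministic-binding} together with Algorithm~\ref{alg:entity-binding}. First, we list every path by which a binding input $c$ can end up recorded under a prior root $e$. Algorithm~\ref{alg:entity-binding} admits exactly two. Either (i) $c$ is assigned directly by $\textsc{FirstTypeCompatible}(Q,c)$ over the indexed candidates $Q\subseteq H_{\tau(c)}$, or (ii) $c$ inherits the root of an appearance group assigned earlier in the turn. Fresh allocation never reuses a prior root, so it can be set aside.

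For path (i), Equation~\ref{eq:deterministic-binding} returns $e$ only if $e\in H_{\tau(c)}$ and $\operatorname{Compat}_{\tau(c)}(O(c),O(e))$ holds. Membership in $H_{\tau(c)}$ gives $\tau(e)=\tau(c)$. We then appeal to the stated definition of compatibility in Section~\ref{subsec:method-ref}: it requires the same type, no observed type-specific conflict, and a type-specific positive match. This is the conjunction in Equation~\ref{eq:compatible-reuse}, with $\operatorname{Share}$ instantiated as the positive-match predicate. For scalar types, the text fixes the positive match to be canonical equality, which yields the additional scalar clause.

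Path (ii) is the main obstacle, because an individual group member is never tested against $e$. I would handle it by applying the proposition to the binding input rather than to each OCR fragment. The paper defines a binding input as either an independent occurrence or the aggregate record of an appearance group. The aggregate is bound once through path (i), so it satisfies Equation~\ref{eq:compatible-reuse}. Members only inherit the root that the aggregate obtained. I would state explicitly that $c$ in the proposition ranges over binding inputs, and note that group assembly already forbids conflicting levels among members. I would also argue that $\textsc{MergeCompatibleState}$ runs only after selection, so merging cannot retroactively make a reuse legal. Same-turn updates are covered because the index $H$ used at each step is the current one, and the check is against $O(e)$ as it stands at that moment.

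The "thus" clause then follows by contraposition. A gold-inconsistent reuse is still a reuse, so its pair satisfies Equation~\ref{eq:compatible-reuse}. The binder consumes only the observation $O_B$ and not the gold referent, so it cannot distinguish $e$ from the intended root on that observation. In that sense the pair is same-type and observationally non-separable.
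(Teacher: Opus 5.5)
Your proposal is correct and matches the paper's proof. That proof simply observes that reuse only happens through type-specific index lookup filtered by the compatibility predicate, which rejects type mismatches and observed conflicts, requires a positive match, and requires canonical equality for scalars. Your explicit handling of appearance-group inheritance (reading $c$ as a binding input) and of merge ordering spells out points the paper's one-line proof leaves implicit.
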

\begin{proof}
Lookup uses type-specific indexes and returns only roots admitted by the corresponding compatibility predicate. Type mismatches and observed structured conflicts are rejected; scalar compatibility additionally checks canonical equality.
\end{proof}

Recent-first and fresh-on-ambiguity agree when exactly one compatible prior root exists. On an ambiguous gold continuation, fresh-on-ambiguity necessarily breaks prior-root continuity, whereas recent-first succeeds only when the most recent compatible root is the intended one. Exact canonical reuse also necessarily allocates a fresh root for cross-form continuations with no equal complete surface. These comparisons explain the continuity benefit of the mechanism but do not establish a general false-merge guarantee.

For a dataset $D$ in which each scored intended root is compatible, let $A$ denote multiple compatible prior roots and $F$ denote selection of a non-gold root. Then $F\subseteq A$, so
\begin{equation}
R_D=\Pr_D(F)=\Pr_D(A)\Pr_D(F\mid A)=q_D\rho_D.
\label{eq:dataset-semantic-impact}
\end{equation}
The controlled 480-session audit observes $0/4{,}352$ disagreements and $0/2{,}031$ impure multi-occurrence chains. The independent stress generator supplies complete compatible sets and intended-root labels: recent-first succeeds in 5,286/6,000 continuations, fresh-on-ambiguity in 4,236/6,000, and recent-first selects a non-gold compatible root in 714 trials. These values are conditioned on the declared generator. For a fixed policy-generated surface, changing only its gold agreement label does not change a fixed reader's input or reconstruction; each distinct policy surface therefore receives its own leakage audit.

\subsection{Benchmark Roles and Artifact Provenance}
\label{sec:appendix-provenance}

The evaluation combines public GUI privacy benchmarks with a controlled dataset designed for private reference mediation. Public benchmarks contribute external screenshot distributions, GUI trajectories, and region-level privacy labels. The controlled benchmark jointly supplies exact source values, bilingual text-to-image correspondence, cross-turn entity identities, and address projections, enabling SAN direct-leakage measurement, REF continuity and reconstruction scoring, and five-turn Full upload-view unions on common gold. PrivScreen and AndroidLab provide complementary external screenshot and task evidence, while the randomized multi-root generator provides an explicit policy-stress distribution. Table~\ref{tab:dataset-evidence-roles} records each evaluated source's annotations and evidentiary scope.

\begin{table*}[!t]
\centering
\caption{Evidence sources, available annotations, and evaluation scope.}
\label{tab:dataset-evidence-roles}
\begin{tabularx}{\linewidth}{@{}L{0.7}L{1}L{1.3}@{}}
\toprule
Evidence source & Available evidence & Role and scope \\
\midrule
CrossTurnPII controlled corpus
& Exact values and spans, rendered regions, text--image links, five-turn entity identity, and component projections
& Supports RQ1 SAN source removal and reader exposure, RQ2 attachment and reconstruction, and RQ3 controlled Full upload-view leakage. It does not provide external screenshot or broad task generality. \\

Structured-address stress
& 7,800 forms, component gold, 11 English and 12 Chinese perturbation families, and two-/three-block forms
& Supports RQ1 complete-source coverage and RQ2 cross-form continuity and PCR. It is a controlled form-distribution stress test rather than external workload evidence. \\

PrivScreen~\cite{zhang2025dualtap}
& 500 captured screens with visible-PII and geometry labels, benchmark privacy questions, and visual questions
& Supports external RQ1 SAN evaluation and RQ3 Full final-view privacy and visual utility. It does not provide cross-turn identity or local-resolution gold. \\

Fixed live-interface suite
& Fixed Android tasks, trajectory-issued references, resolved actions, and automatically scored answers
& Supports RQ2 verification of reference-required completion and planner-use paths. It is a targeted capability suite rather than a broad task benchmark. \\

AndroidLab~\cite{xu-etal-2025-androidlab}
& 138 public tasks over nine apps, fixed resets and judgments, task and action metrics, and retained trajectories
& Its SAN-only-to-Full contrast supports RQ2 mechanism attribution, while complete Full-system comparisons support RQ3 task utility. It does not provide exact source-extent, entity-identity, or projection gold. \\

Randomized multi-root stress
& Seeded compatible roots, generator-provided intended roots, ambiguity strata, and exact continuation outcomes
& Supports RQ2 binding-policy conformance and policy--intent agreement. Its outcomes are conditioned on the declared generator. \\

Xiaomi 12S workload
& Handset timing, memory, persistence, restoration, and completion records
& Supports RQ4 device-resident latency, memory, and completion measurements. It is not a comparative privacy or broad task benchmark. \\
\bottomrule
\end{tabularx}
\end{table*}

\paragraph{Controlled benchmark annotations.}
The controlled benchmark contains 2,400 rendered screenshots in 480 five-turn bilingual sessions. Structured persona records and deterministic scripts provide exact values, source regions, cross-turn identity, and address projections; scenario drafts were model-assisted and author-curated. No collected conversations or user records are used. The app-agnostic chat shell varies alignment, wrapping, density, free-flowing bubbles, field--value cards, and mixed layouts. Its 800-image reader subset contains 5,632 gold entities for single-image and five-turn reconstruction.

\paragraph{Structured-address stress construction.}
The address split re-instantiates 240 identities under 11 English and 12 Chinese form families, producing 3,720 English and 4,080 Chinese inputs with exact component and identity gold. It covers full and shortened forms, component views, abbreviations, reordered administration, OCR noise, and two-/three-block fragmentation. Because family coverage rather than real-world prevalence determines the sample, this is a mechanism stress test, not a population estimate; PrivScreen and AndroidLab supply complementary external evidence.

\paragraph{Baseline source and adapter audit.}
\label{sec:appendix-baseline-audit}
AbI~\cite{zhao2026anonymization} uses archived commit \texttt{5d70e9914acf9dda6f989cda6ac9402fffa47938}; its adapters invoke the pinned detection, exact value-map reuse, placeholder, and restoration paths. The evaluation-only \texttt{abi\_san\_only} condition keeps detected regions but releases opaque fill and no token metadata. Presidio Analyzer 2.2.362 uses commit \texttt{06616b33dc2b5d0cfbd9dc3eaf47f649024d84ad}; the GLiNER row uses \texttt{urchade/gliner\_multi\_pii-v1} at threshold 0.30. Both Presidio screenshot adapters are SAN-only. DualTAP~\cite{zhang2025dualtap} is not executable because its inspected entry point requires an unavailable generator checkpoint, and CAPED~\cite{shen2026caped} remains related work rather than a residual-content baseline. The anonymous source retains the pins, portability patches, adapters, and configurations.

\begin{table*}[!t]
\centering
\caption{Type-aware lexicon--gold content audit.}
\label{tab:appendix-lexicon-audit}
\begin{tabularx}{\textwidth}{@{}L{2}R{0.7}C{0.8}C{0.8}C{0.85}C{0.85}@{}}
\toprule
Corpus / type & Unique gold & Whole-value exact & Identifier exact & Structural atom median / p95 & Public-prior atom median / p95 \\
\midrule
Prompt EN / all     & 2,313 & 0 (0.00\%)  & 0 & 0.0000 / 0.5000 & 0.0000 / 0.6000 \\
\quad NAME          & 321   & 0 (0.00\%)  & 0 & 0.0000 / 0.0000 & 0.5556 / 0.6667 \\
\quad ADDRESS       & 598   & 0 (0.00\%)  & 0 & 0.2222 / 0.6000 & 0.0000 / 0.5833 \\
\addlinespace
Prompt ZH / all     & 2,186 & 15 (0.69\%) & 0 & 0.0000 / 0.2500 & 0.0000 / 0.3333 \\
\quad NAME          & 328   & 0 (0.00\%)  & 0 & 0.0000 / 0.0000 & 0.3333 / 0.3333 \\
\quad ADDRESS       & 474   & 15 (3.16\%) & 0 & 0.1429 / 0.4286 & 0.0769 / 0.5000 \\
\addlinespace
Address EN / ADDRESS & 880   & 15 (1.70\%) & 0 & 0.1282 / 0.2941 & 0.1429 / 0.4737 \\
Address ZH / ADDRESS & 1,039 & 0 (0.00\%)  & 0 & 0.0870 / 0.2500 & 0.0769 / 0.3333 \\
\bottomrule
\end{tabularx}
\PGTableNote{\textit{Note.} Exact equality uses NFKC normalization, case folding, and removal of punctuation and spacing. Atom columns report, for each unique gold value $g$ and type-applicable role $r$, the median and 95th percentile of $\max_{a\in r,\,a\subseteq g}|a|/|g|$. For ZH NAME, 0.3333 is the expected share of a one-character surname in a three-character full name; such surname and common-name atoms are public recognition priors and store no full-name association. NAME whole-value exact matches are zero. All 30 unique whole-value matches are standalone public \texttt{CITY} components; composite records and high-specificity identifiers have zero exact matches. The inventory classifier finds zero record-like positive or structural entries.}
\end{table*}

\begin{table*}[!t]
\centering
\caption{Conservative PrivacyGuard-only monotone sensitivity on common-gold sessions.}
\label{tab:appendix-alr-monotone-sensitivity}
\begin{tabularx}{\textwidth}{@{}L{0.7}C{0.45}C{1.0}L{1.4}L{1.45}@{}}
\toprule
Reader & Lang & AbI Full ALR & Ours SAN-only-or-Full ALR & $\Delta$ (95\% CI) \\
\midrule
Qwen   & EN & 0.3486 & 0.2202 & $-0.1284$ [$-0.1607,-0.0937$] \\
       & ZH & 0.7022 & 0.1264 & $-0.5758$ [$-0.6094,-0.5408$] \\
Gemini & EN & 0.3271 & 0.2260 & $-0.1012$ [$-0.1344,-0.0670$] \\
       & ZH & 0.7163 & 0.1503 & $-0.5660$ [$-0.5970,-0.5353$] \\
\bottomrule
\end{tabularx}
\PGTableNote{\textit{Note.} This sensitivity charges PrivacyGuard for a gold value exposed in either the SAN-only ablation or the Full per-view output, then unions the exact/partial five-turn reconstruction. It complements the matched Full-pipeline comparison in Table~\ref{tab:eval-full-leakage}. Intervals use the same 10,000 persona-paired resamples.}
\end{table*}

\begin{table*}[!t]
\centering
\caption{Bootstrap intervals for the main text/address SAN comparison.}
\label{tab:appendix-bootstrap-summary}
\begin{tabularx}{\textwidth}{@{}L{0.7}C{0.4}L{1.1}L{1.4}L{1.4}@{}}
\toprule
Setting & Lang & Method & GenR@1.0 [95\% CI] & $\mathrm{LR}_{\mathrm{char}}$ [95\% CI] \\
\midrule
Text & EN & PrG & 0.7912 [0.7808, 0.8015] & 0.1283 [0.1202, 0.1359] \\
Text & EN & Ours & 0.9740 [0.9619, 0.9846] & 0.0125 [0.0075, 0.0186] \\
Text & ZH & PrA & 0.5048 [0.4710, 0.5389] & 0.3998 [0.3808, 0.4193] \\
Text & ZH & Ours & 0.9971 [0.9931, 0.9997] & 0.0023 [0.0002, 0.0054] \\
Address & EN & PrG & 0.6247 [0.6127, 0.6352] & 0.0537 [0.0465, 0.0612] \\
Address & EN & Ours & 0.8981 [0.8691, 0.9238] & 0.0430 [0.0318, 0.0559] \\
Address & ZH & PrG & 0.6567 [0.6481, 0.6647] & 0.0114 [0.0077, 0.0156] \\
Address & ZH & Ours & 0.9572 [0.9447, 0.9683] & 0.0071 [0.0051, 0.0094] \\
\bottomrule
\end{tabularx}
\PGTableNote{\textit{Note.} Intervals use 5,000 group resamples with seed 1729 and preserve within-persona correlation. Point estimates match Table~\ref{tab:eval-text-address-main}.}
\end{table*}

\paragraph{Repository and lexicon snapshot audit.}
All headline experiments start with an empty local privacy repository. Detector inputs consist of static lexicon resources, deterministic parsing rules, and session evidence accumulated online during the current task. A content audit compares all 33 JSON lexicon files with every unique prompt/address gold value, records any whole-value overlap, and verifies that annotation labels, per-example private records, full NAME values, and high-specificity identifiers are absent from detector inputs. The controlled screenshots are rendered from the prompt records, so their source gold is counted once. The snapshot contains 7,842 file-scoped entries classified as public-prior atoms, format/structure patterns, contextual boundaries, or negative/boundary terms. The inventory classifier finds zero record-like positive or structural entries; its single record-like heuristic flag is a negative-name blocklist item.

The revised audit restricts each atom comparison to the PII families for which that resource is consumed---for example, a surname-list character is relevant to a \texttt{NAME}, not merely to the same character occurring inside an address. Table~\ref{tab:appendix-lexicon-audit} reports both complete-value equality and the share contributed by the largest single applicable atom.

Matched atoms consist of structural keys such as city/district/road/residential-compound markers and street or unit suffixes, common-name atoms used by name rules, public administrative or geographic names, and format prefixes. These matches quantify the public parsing support contributed by individual atoms. Complete-value equality and largest-atom coverage in Table~\ref{tab:appendix-lexicon-audit} provide the record-separation audit. The anonymous source includes the type-applicability map and audit implementation; Table~\ref{tab:appendix-lexicon-audit} reports the per-type summary used here, and Appendix~\ref{sec:appendix-prompt} reports the post-ingest sensitivity diagnostic.

\subsection{Statistical Results}
\label{sec:appendix-statistics}

\paragraph{Statistical protocol.}
For the prompt and structured-address headline metrics, we report percentile 95\% bootstrap confidence intervals. Resampling is performed over whole personas with replacement---\texttt{persona\_id} for prompts and \texttt{persona\_consistency\_key} for address variants---so correlated turns or variants from one private profile remain in the same resampled group. We use 5,000 resamples and PRNG seed 1729. Table~\ref{tab:appendix-bootstrap-summary} reports intervals for the main SAN text/address comparison. The CrossTurnPII REF audit pairs each method's Full and SAN-only outcomes and resamples the 40 complete personas (four five-turn sessions per persona) 10,000 times with seed 1729. For Qwen/Gemini, the pooled method-specific conditional re-exposure rates are 0.0955 [0.0741,0.1171]/0.0798 [0.0625,0.0960] for AbI and 0.0258 [0.0177,0.0351]/0.0272 [0.0182,0.0372] for PrivacyGuard. Table~\ref{tab:eval-crossturnpii-reference-addition} keeps language strata and secondary signed contrasts, while Table~\ref{tab:eval-ref-reconstruction-cost} reports the paired common-protected comparison. The common-gold audited-leakage comparison uses the same persona-paired protocol within each language and recomputes both Full pipelines on every resample. Table~\ref{tab:eval-full-leakage} reports all four PrivacyGuard-minus-AbI intervals, and Table~\ref{tab:appendix-full-union-decomposition} separates the per-view and five-turn unions. PrivScreen reports absolute SAN and Full outcomes from the same 500 images and fixed readers but supplies no RQ2 increment. Controlled mechanism, restoration, and automatically verified live-interface results are finite-suite counts. Android comparisons use task-paired bootstrap intervals over the same 138 tasks (Table~\ref{tab:eval-android-paired}). A separate 5,000-resample whole-\texttt{persona\_consistency\_key} bootstrap provides the strict-PCR intervals for Table~\ref{tab:eval-ref-capability-continuity}; the retained bootstrap JSON reports pair-coverage, conditional-PCR, and strict-PCR intervals for the four prespecified audited rows.

\end{document}